\documentclass[10pt]{article}
\setlength{\textwidth}{17cm} \setlength{\textheight}{23cm}
\oddsidemargin=-0.3cm \topmargin=-1.5cm

\usepackage{amsthm,amsmath,amssymb}
\usepackage{subfig,sidecap,caption}
\usepackage{paralist}
\usepackage[usenames,dvipsnames]{color}
\usepackage[pdftex,breaklinks,colorlinks,
    citecolor={BlueViolet}, linkcolor={Blue},urlcolor=Maroon]{hyperref}  
\usepackage{tikz,pgfkeys}
\usepackage{tkz-graph}
\usetikzlibrary{decorations.pathmorphing}
\usetikzlibrary{quotes}
\usetikzlibrary{arrows.meta}
\usetikzlibrary{arrows,shapes}
\usetikzlibrary{decorations.pathreplacing, decorations.shapes}
\usepackage{graphicx}
\usepackage{charter,eulervm}%
\usepackage{multirow,booktabs,array}
\usepackage{tabularx}
\usepackage[final,expansion=alltext,protrusion=true]{microtype}

\theoremstyle{plain} 
\newtheorem{theorem}{Theorem}[section]
\newtheorem{lemma}[theorem]{Lemma}
\newtheorem{corollary}[theorem]{Corollary}
\newtheorem{proposition}[theorem]{Proposition}

\newcommand{\hsc}[1]{{\footnotesize\sf\MakeUppercase{#1}}}
\newcommand{\problem}[2]{\hsc{#1} $\rightarrow$ \hsc{#2}}
\newcommand{\lp}[1]{\ensuremath{{\mathtt{lp}(#1)}}}
\newcommand{\rp}[1]{\ensuremath{{\mathtt{rp}(#1)}}}
\newcommand{\comment}[1]{\textbackslash\!\!\textbackslash {\em #1}}
\tikzstyle{class} = [shape=rectangle, rounded corners, draw, align=center, top color=white, bottom color=blue!20]
\tikzstyle{vertex}  = [{fill=blue,circle,draw,inner sep=1pt}]

\title{Vertex Deletion Problems on Chordal Graphs}

\author{Yixin Cao\thanks{Department of Computing, Hong Kong
    Polytechnic University, Hong Kong,
    China. \href{mailto:yixin.cao@polyu.edu.hk} {\tt
      \{yixin.cao, yu.ke, jie.you\}@polyu.edu.hk}}
  \and
  Yuping Ke\addtocounter{footnote}{-1}\footnotemark
  \and
  Yota Otachi\thanks{Faculty of Advanced Science and Technology, Kumamoto University. Kumamoto, Japan. \href{mailto:otachi@cs.kumamoto-u.ac.jp}{otachi@cs.kumamoto-u.ac.jp}}
  \and
  Jie You{\addtocounter{footnote}{-2}\footnotemark}
 }

\date{}

\begin{document}
\maketitle

\begin{abstract}
  Containing many classic optimization problems, the family of vertex deletion problems has an important position in algorithm and complexity study.  The celebrated result of Lewis and Yannakakis gives a complete dichotomy of their complexity.  It however has nothing to say about the case when the input graph is also special.  This paper initiates a systematic study of  vertex deletion problems from one  subclass of chordal graphs to another.  We give polynomial-time algorithms or proofs of NP-completeness for most of the problems.  In particular, we show that the vertex deletion problem from chordal graphs to interval graphs is NP-complete.
\end{abstract}

\section{Introduction}\label{sec:intro}
Generally speaking, a vertex deletion problem asks to transform an input graph to a graph in a certain class by deleting a minimum number of vertices.  Many classic optimization problems belong to the family of vertex deletion problems, and their algorithms and complexity have been intensively studied.  For example, the clique problem and the independent set problem are nothing but the vertex deletion problems to complete graphs and to edgeless graphs respectively.   Most interesting graph properties are \emph{hereditary}: If a graph satisfies this property, then so does every induced subgraph of it.  For all the vertex deletion problems to hereditary graph classes, Lewis and Yannakakis~\cite{lewis-80-node-deletion-np} have settled their complexity once and for all with a dichotomy result: They are either NP-hard or trivial.   Thereafter algorithmic efforts were mostly focused on the nontrivial ones, and the major approaches include  approximation algorithms \cite{lund-93-approximation-maximum-subgraph}, parameterized algorithms \cite{cai-96-hereditary-graph-modification}, and exact algorithms \cite{fomin-16-exact-via-monotone-local-search}.

Chordal graphs make one of the most important graph classes.  Together with many of its subclasses, it has played important roles in the development of structural graph theory.  (We defer their definitions to the next section.)  Many algorithms have been developed for vertex deletion problems to chordal graphs and its subclasses,---most notably (unit) interval graphs, cluster graphs, and split graphs; see, e.g., \cite{fomin-15-large-induced-subgraphs, bliznets-16-max-chordal-interval-subgraphs, cao-16-chordal-editing, cao-15-interval-deletion, cao-17-unit-interval-editing, cao-17-cluster-vertex-deletion, cygan-13-split-vertex-deletion, jansen-16-approximation-and-kernelization-chordal-deletion, agrawal-16-chordal-deletion} for a partial list.
After the long progress of algorithmic achievements, some natural questions arise: What is the complexity of transforming a chordal graph to a (unit) interval graph, a cluster graph, a split graph, or a member of some other subclass of chordal graphs?  It is quite surprising that this type of problems has not been systematically studied, save few concrete results, e.g., the polynomial-time algorithms for the clique problem, the independent set problem, and the feedback vertex set problem (the object class being forests)~\cite{gavril-72-coloring, yannakakis-87-k-colorable-subgraph}.

The same question can be asked for other pair of source and object graph classes.  The most important source classes include planar graphs \cite{garey-76-simplified-problems, garey-77-rectilinear-steiner-tree, fomin-12-f-deletion}, bipartite graphs \cite{yannakakis-81-bipartite-node-deletion}, and degree-bounded graphs~\cite{garey-79}.  As one may expect, with special properties imposed on input graphs, the problems become easier, and some of them may not remain NP-hard.  Unfortunately, a clear-cut answer to them seems very unlikely, since their complexity would depend upon both the source class and the object class.  Indeed, some are trivial (e.g., vertex cover on split graphs), some remain NP-hard (e.g., vertex cover on planar graphs), while some others are in P but can only be solved by very nontrivial polynomial-time algorithms (e.g., vertex cover on bipartite graphs).

Throughout the paper we write the names of graph classes in {small capitals}; e.g., \hsc{chordal}  and \hsc{bipartite} stand for the class of chordal graphs and the class of bipartite graphs respectively.   We use $\cal C$, commonly with subscripts, to denote an unspecified hereditary graph class, and use ${\cal C}_1 \rightarrow {\cal C}_2$ to denote the vertex deletion problem from class ${\cal C}_1$ to class ${\cal C}_2$:
\begin{quote}
 Given a graph $G$ in ${\cal C}_1$, one is asked for a minimum set $V_-\subseteq V(G)$ such that $G - V_-$ is in ${\cal C}_2$.  
\end{quote}
  It is worth noting that ${\cal C}_2$ may or may not be a subclass of ${\cal C}_1$, and when it is not, the problem is equivalent to ${\cal C}_1 \rightarrow {\cal C}_1\cap {\cal C}_2$: Since ${\cal C}_1$ is hereditary, $G - V_-$ is necessarily in ${\cal C}_1$.  For almost all classes $\cal C$, the complexity of problems \problem{planar}{}{$\cal C$} and \problem{bipartite}{}{$\cal C$} has been answered in a systematical manner \cite{lewis-80-node-deletion-np, yannakakis-81-bipartite-node-deletion}, while for most graph classes $\cal C$, the complexity of problem \problem{degree-bounded}{}{$\cal C$} has been satisfactorily determined \cite{garey-79}.

Apart from \hsc{chordal}, we would also consider vertex deletion problems on its subclasses.  Therefore, our purpose in this paper is a focused study on the algorithms and complexity of ${\cal C}_1 \rightarrow {\cal C}_2$ with both ${\cal C}_1$ and ${\cal C}_2$ being subclasses of \hsc{chordal}.  
Since it is generally acknowledged that the study of chordal graphs motivated the theory of perfect graphs \cite{hajnal-58-chordal-graphs, berge-67-some-perfect-graphs}, the importance of chordal graphs merits such a study from the aspect of structural graph theory.
However, our main motivation is from the recent algorithmic progress in vertex deletion problems.  It has come to our attention that to transform a graph to class ${\cal C}_1$, it is frequently convenient to first make it a member of another class ${\cal C}_2$ that contains ${\cal C}_1$ as a proper subclass, followed by an algorithm for the ${\cal C}_2 \rightarrow {\cal C}_1$ problem~\cite{bevern-10-pivd, cao-15-interval-deletion, cao-16-almost-interval-recognition, cao-17-cluster-vertex-deletion}.

There being many subclasses of \hsc{chordal}, the number of problems fitting in our scope is quite prohibitive.  The following simple observations would save us a lot of efforts.

\begin{proposition}\label{lem:general-polynomial}
  Let ${\cal C}_1$ and ${\cal C}_2$ be two graph classes. 
  \begin{enumerate}[(1)]
  \item If the ${\cal C}_1 \rightarrow {\cal C}_2$ problem can be
    solved in polynomial time, then so is ${\cal C} \rightarrow {\cal
      C}_2$ for any subclass ${\cal C}$ of ${\cal C}_1$.
  \item If the ${\cal C}_1 \rightarrow {\cal C}_2$ problem is NP-complete,
    then so is ${\cal C} \rightarrow {\cal C}_2$ for any superclass
    ${\cal C}$ of ${\cal C}_1$.
  \end{enumerate}
\end{proposition}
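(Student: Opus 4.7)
The plan for both parts rests on a single observation: a graph $G$ belonging to the narrower of two source classes is simultaneously a valid input to either version of the problem, and the optimization question posed on $G$ is \emph{literally the same} in either case, namely to find a minimum $V_- \subseteq V(G)$ with $G - V_- \in {\cal C}_2$. Neither the set of feasible solutions nor the objective depends on whether we view $G$ as lying in the smaller or the larger source class. Once this is spelled out, each part collapses to a one-line argument, and I do not anticipate any real obstacle.

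For part~(1), assume ${\cal C} \subseteq {\cal C}_1$ and fix a polynomial-time algorithm $\mathcal{A}$ for ${\cal C}_1 \rightarrow {\cal C}_2$. Given an input $G \in {\cal C}$ to the ${\cal C} \rightarrow {\cal C}_2$ problem, I would simply invoke $\mathcal{A}$ on $G$; because $G \in {\cal C} \subseteq {\cal C}_1$, this is a legal call, and the set $V_-$ returned by $\mathcal{A}$ is by definition a minimum vertex set whose deletion places $G$ in ${\cal C}_2$. Hence it is an optimal solution for the ${\cal C} \rightarrow {\cal C}_2$ instance as well, and the running time is inherited from $\mathcal{A}$.

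For part~(2), assume ${\cal C}_1 \subseteq {\cal C}$ and that the decision version of ${\cal C}_1 \rightarrow {\cal C}_2$ is NP-complete. To establish NP-hardness of ${\cal C} \rightarrow {\cal C}_2$, I would use the \emph{identity} polynomial-time many-one reduction: a decision instance $(G, k)$ of ${\cal C}_1 \rightarrow {\cal C}_2$ is, without modification, a decision instance of ${\cal C} \rightarrow {\cal C}_2$, since $G \in {\cal C}_1 \subseteq {\cal C}$, and the yes/no answers of the two instances coincide by the observation above. Membership in NP is routine under the mild assumption that ${\cal C}_2$ admits polynomial-time recognition---which is the case for every subclass of \hsc{chordal} considered in this paper---since a candidate set $V_-$ with $|V_-| \le k$ can be guessed and then verified by testing whether $G - V_- \in {\cal C}_2$.

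The only subtlety worth flagging is this implicit NP-membership assumption in part~(2); for every object class that will appear in the sequel it is immediate, so I would simply note it in passing rather than build any formal machinery around it. Otherwise the proof is purely definitional, and writing it out should take only a few lines.
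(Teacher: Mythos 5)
Your proof is correct and matches the paper's intent: the paper states this proposition without proof, treating it as a trivial observation, and your argument (run the same algorithm on the smaller source class; use the identity reduction for hardness, with NP-membership following from polynomial-time recognition of ${\cal C}_2$, exactly as the paper notes at the start of Section~\ref{sec:hardness}) is precisely the omitted routine justification.
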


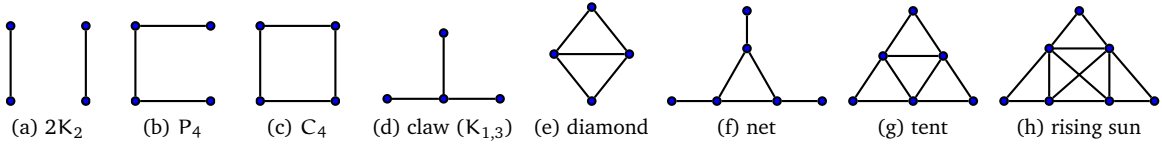
\begin{figure*}[t]
  \centering\footnotesize
  \subfloat[{$2 K_2$}]{\label{fig:2k2}
    \begin{tikzpicture}[auto=left,every node/.style={vertex}, every path/.style={thick},scale=.5]
      \node (a) at (-1,0) {};
      \node (c) at (1,0) {};
      \node (b) at (-1,2) {};
      \node (d) at (1,2) {};
      \draw (a) -- (b) (c) -- (d);
      \node[white] at (1.25,0) {};
      \node[white] at (-1.25,0) {};
    \end{tikzpicture}
  }
  \,
  \subfloat[{$P_4$}]{\label{fig:p4}
    \begin{tikzpicture}[auto=left,every node/.style={vertex}, every path/.style={thick},scale=.5]
      \node (a) at (-1,0) {};
      \node (c) at (1,0) {};
      \node (b) at (-1,2) {};
      \node (d) at (1,2) {};
      \draw (d) -- (b) -- (a) -- (c);
      \node[white] at (1.25,0) {};
      \node[white] at (-1.25,0) {};
    \end{tikzpicture}
  }
  \,
  \subfloat[{$C_4$}]{\label{fig:c4}
    \begin{tikzpicture}[every node/.style={vertex}, every path/.style={thick}, scale=.5]
      \node (a) at (-1,0) {};
      \node (c) at (1,0) {};
      \node (b) at (-1,2) {};
      \node (d) at (1,2) {};
      \draw (a) -- (b) -- (d) -- (c) -- (a);
      \node[white] at (1.25,0) {};
      \node[white] at (-1.25,0) {};
    \end{tikzpicture}
  }
  \,
  \subfloat[claw ($K_{1,3}$)]{\label{fig:claw}
    \begin{tikzpicture}[every path/.style={thick}, scale=.25]
    \node [vertex] (a1) at (-3., 0) {};
    \node [vertex] (v) at (0, 0) {};
    \node [vertex] (b1) at (3., 0) {};
    \node [vertex] (c) at (0,3.5) {};
    \node at (-3.5, 0) {};
    \node at (3.5, 0) {};
    \draw[] (a1) -- (v) -- (b1);
    \draw[] (v) -- (c);
    \end{tikzpicture}
  }
  \subfloat[{diamond}]{\label{fig:diamond}
    \begin{tikzpicture}[every path/.style={thick}, scale=.25]
    \node [vertex] (a1) at (-2, 0) {};
    \node [vertex] (v) at (0, -2.5) {};
    \node [vertex] (b1) at (2, 0) {};
    \node [vertex] (c) at (0, 2.5) {};
    \node at (3, 0) {}; //position adjustment
    \node at (-3, 0) {}; //position adjustment
    \draw (b1) -- (c) -- (a1) -- (v) -- (b1) -- (a1);
  \end{tikzpicture}
  }
  \subfloat[net]{\label{fig:net}
    \begin{tikzpicture}[every path/.style={thick}, scale=.2]
    \node [vertex] (s) at (0,6) {};
    \node [vertex] (a) at (-5,0) {};
    \node [vertex] (a1) at (-2,0) {};
    \node [vertex] (b1) at (2,0) {};
    \node [vertex] (b) at (5,0) {};
    \node [vertex] (c) at (0,3.5) {};
    \draw[] (a) -- (a1) -- (b1) -- (b);
    \draw[] (c) -- (s);
    \draw[] (a1) -- (c) -- (b1);
    \end{tikzpicture}
  }
  \,
  \subfloat[tent]{\label{fig:tent}
    \begin{tikzpicture}[every path/.style={thick}, scale=.2]
    \node [vertex] (s) at (0,6) {};
    \node [vertex] (a) at (-4,0) {};
    \node [vertex] (a1) at (0, 0) {};
    \node [vertex] (b) at (4,0) {};
    \node [vertex] (c1) at (-2,3) {};
    \node [vertex] (c2) at (2,3) {};
    \draw[] (a) -- (a1) -- (b) -- (c2) -- (s) -- (c1) -- (a);
    \draw[] (c1) -- (c2) -- (a1) -- (c1);
    \end{tikzpicture}
  }
  \,
  \subfloat[rising sun]{\label{fig:rising-sun}
    \begin{tikzpicture}[every path/.style={thick}, scale=.2]
    \node [vertex] (s) at (0,6) {};
    \node [vertex] (a) at (-5, 0) {};
    \node [vertex] (a1) at (-2,0) {};
    \node [vertex] (b1) at (2,0) {};
    \node [vertex] (b) at (5,0) {};
    \node [vertex] (c1) at (-2,3.5) {};
    \node [vertex] (c2) at (2,3.5) {};
    \draw[] (a) -- (a1) -- (b1) -- (b) -- (c2) -- (s) -- (c1) -- (a);
    \draw[] (c1) -- (c2) -- (a1) -- (c1) -- (b1) -- (c2);
    \end{tikzpicture}
  }
  \caption{Small subgraphs.}
  \label{fig:small-graphs}
\end{figure*}

For example, the majority of our hardness results for problems \problem{chordal}{}{$\cal C$} are obtained by proving the hardness of \problem{split}{}{$\cal C$}.  Indeed, this is very natural as in literature, most (NP-)hardness of problems on chordal graphs is proved on split graphs, e.g., dominating set~\cite{bertossi-84-domination-split-bipartite}, Hamiltonian path~\cite{muller-96-hamiltonian-chordal-bipartite}, and maximum cut~\cite{bodlaender-00-maximum-cut}.  The most famous exception is probably the pathwidth problem, which can be solved in polynomial time on split graphs but becomes NP-complete on chordal graphs \cite{gustedt-93-pathwidth-chordal}.  No problem like this surfaces during our study, though we do have the following hardness result proved directly on chordal graphs, for which we have no conclusion on split graphs.
\begin{theorem}\label{thm:biconnected}
Let $F$ be a biconnected chordal graph.  If $F$ is not complete, then the \problem{chordal}{$F$-free} problem is NP-complete.
\end{theorem}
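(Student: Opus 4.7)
The plan is to prove NP-completeness by reducing \hsc{vertex cover} to \problem{chordal}{$F$-free}. Membership in NP is immediate, as induced copies of a fixed $F$ can be enumerated in polynomial time. The structural property of $F$ we exploit is the existence of two non-adjacent simplicial vertices $v_1, v_2$. Since $F$ is chordal and $2$-connected but not complete, its clique tree has at least two leaves, and each leaf bag contributes at least one vertex lying outside every minimal separator; any such vertex is simplicial. Writing $F^\ast := F - \{v_1, v_2\}$, the graph $F$ is recovered from $F^\ast$ by reattaching $v_1$ and $v_2$ as simplicial vertices at the interface cliques $N_F(v_1)$ and $N_F(v_2)$ in $F^\ast$.

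Given an instance $(H, k)$ of \hsc{vertex cover}, I would construct a chordal graph $G$ by assigning to each vertex $u \in V(H)$ a hub, and to each edge $e = uw \in E(H)$ a private copy of $F^\ast$ attached so that the hubs of $u$ and $w$ play the roles of $v_1$ and $v_2$ respectively; the resulting gadget then induces a copy of $F$. Multiplicity padding (for instance, parallel copies of each edge gadget, or a large private clique attached to each hub) forces any optimum deletion to choose hubs, not interior gadget vertices, so the minimum size of a deletion set equals the minimum vertex cover of $H$ plus a controlled constant.

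The main obstacle is to keep $G$ chordal and to ensure that no unintended induced copy of $F$ appears. A naive construction with a single shared hub vertex per vertex of $H$ produces a chordless cycle of length $2\ell$ in $G$ for every cycle of length $\ell$ in $H$, violating chordality. The natural fix is to enlarge each hub into a clique of sufficient size and to route the adjacencies between hub cliques and gadgets so that any cycle traversing more than one edge gadget uses at least two vertices from some hub clique and therefore picks up a chord; together with the chordality of $F$ itself inside each single-edge gadget, this yields a chordal $G$. Preventing spurious induced copies of $F$ then relies on $2$-connectivity of $F$: each interface clique between a hub and a gadget is a clique separator of $G$, and a $2$-connected graph cannot be spread across a clique cutset inside an induced copy, so every induced copy of $F$ in $G$ lives inside exactly one edge gadget together with its hubs. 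Executing both the chordality and the copy-control arguments in the generality of an arbitrary biconnected non-complete chordal $F$ is the technical heart of the proof.
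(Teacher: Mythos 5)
There is a genuine gap, and it sits exactly where you locate ``the technical heart'': neither the chordality of your construction nor the copy-control argument is actually established, and the copy-control step as stated rests on a false principle. Biconnectivity of $F$ only prevents an induced copy from containing vertices on both sides of a \emph{cut vertex} (a clique cutset of size one); it does \emph{not} prevent an induced copy from straddling a clique cutset of size two or more --- a diamond, for instance, is biconnected and is split by its central edge. But your proposed fix for chordality is precisely to blow each hub up into a clique, so every hub--gadget interface becomes a clique separator of size at least two, and then ``a $2$-connected graph cannot be spread across a clique cutset'' no longer applies: nothing rules out an induced copy of $F$ using part of a hub clique together with pieces of two different edge gadgets. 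The cost accounting is also unresolved: if all vertices of a hub clique play the role of the attachment $v_1$, deleting a single hub vertex does not destroy the incident copies of $F$, so either whole hub cliques must be deleted (and then the claimed ``vertex cover plus a controlled constant'' is wrong and the padding needed to beat cheap interior deletions must be quantified and proved), or the construction must be changed. So the proposal is a plausible program, not a proof.

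For contrast, the paper's reduction (also from vertex cover) avoids all of this by choosing \emph{adjacent} attachments: for each edge $uv$ of the instance it glues a private copy of $F$ along the edge $uv$ (using that $F$ has an edge), and it makes the entire original vertex set into one clique. Chordality is then immediate (clique sums of chordal graphs along edges of a clique), there are no hub cliques and no additive constant, non-completeness of $F$ is used exactly to exclude copies of $F$ lying inside the big clique, and biconnectivity is used only through the legitimate cut-vertex argument: once one attachment of a gadget is deleted, the other attachment is a cut vertex, so no copy of $F$ can mix that gadget's interior with outside vertices; this gives both directions of the equivalence with the vertex cover number exactly. If you want to salvage your variant with non-adjacent simplicial attachments, you would need to either prove the straddling copies cannot be isomorphic to $F$ for your specific interface structure (this is not true for free and is where the real work lies), or redesign the hubs so that the separators relevant to copy-control have size one, which essentially brings you back to the paper's construction.
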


Another simple observation of common use to us is about complement graph classes.  The \emph{complement} $\overline G$ of graph $G$ is defined on the same vertex set $V(G)$, where a pair of distinct vertices $u$ and $v$ is adjacent in $\overline G$ if $u v \not\in E(G)$.  It is easy to see that the complement of $\overline G$ is $G$.  In Figure~\ref{fig:small-graphs}, for example,  the net and the tent are the complements of each other.  The \emph{complement} of a graph class $\cal C$, denoted by $\overline{\cal C}$, comprises all graphs whose complements are in $\cal C$; e.g., the complement of \hsc{complete split} is \hsc{$\{2K_2, P_3\}$-free}.  A graph class $\cal C$ is \emph{self-complementary} if it is its own complement, i.e., a graph $G \in \cal C$ if and only if $\overline G \in \cal C$.  For example, both \hsc{split} and \hsc{threshold} are self-complementary.\footnote{We should not confuse the self-complementary property of graph classes and the self-complementary property of graphs---a graph is \emph{self-complementary} if it is isomorphic to its complement.  For example, the statement ``\textit{threshold graphs are self-complementary}'' is incorrect, because most threshold graphs are not isomorphic to their complements, though the later are necessarily threshold graphs.}
As usual, $n$ denotes the number of vertices in the input graph.  Note that we need an $n^2$ item because it takes $O(n^2)$ time to compute the complement of a graph.

\begin{proposition}\label{lem:complement}
  Let ${\cal C}_1$ and ${\cal C}_2$ be two graph classes.  If the ${\cal C}_1 \rightarrow {\cal C}_2$ problem can be solved in $f(n)$ time, then the $\overline {\cal C}_1 \rightarrow \overline{\cal C}_2$ problem can be solved in $O(f(n) + n^2)$ time.
\end{proposition}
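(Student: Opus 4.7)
The plan is to reduce the $\overline{\mathcal{C}_1} \rightarrow \overline{\mathcal{C}_2}$ problem to the $\mathcal{C}_1 \rightarrow \mathcal{C}_2$ problem via a single complementation. Given an input graph $G \in \overline{\mathcal{C}_1}$, first construct $\overline G$ explicitly, which takes $O(n^2)$ time by iterating over all unordered vertex pairs and flipping adjacency. By the definition of $\overline{\mathcal{C}_1}$, the graph $\overline G$ lies in $\mathcal{C}_1$, so we may invoke the assumed algorithm for $\mathcal{C}_1 \rightarrow \mathcal{C}_2$ on $\overline G$ in $f(n)$ time, obtaining a minimum set $V_- \subseteq V(G)$ with $\overline G - V_- \in \mathcal{C}_2$. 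Return $V_-$ as the answer.

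The correctness rests on the elementary identity $\overline{G}[S] = \overline{G[S]}$ for every $S \subseteq V(G)$, which follows directly from the definitions: for any two distinct vertices $u,v \in S$, they are adjacent in $\overline{G}[S]$ iff $uv \notin E(G)$ iff they are adjacent in $\overline{G[S]}$. Specializing to $S = V(G) \setminus V_-$ yields $G - V_- = \overline{\overline G - V_-}$, so $\overline G - V_- \in \mathcal{C}_2$ is equivalent to $G - V_- \in \overline{\mathcal{C}_2}$. This shows $V_-$ is a feasible solution for the $\overline{\mathcal{C}_1} \rightarrow \overline{\mathcal{C}_2}$ instance $G$.

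For optimality, I argue by the same equivalence in the opposite direction: any candidate set $V'_- \subseteq V(G)$ with $G - V'_- \in \overline{\mathcal{C}_2}$ satisfies $\overline G - V'_- = \overline{G - V'_-} \in \mathcal{C}_2$, making it feasible for the $\mathcal{C}_1 \rightarrow \mathcal{C}_2$ instance $\overline G$. Hence $|V_-| \le |V'_-|$, so $V_-$ is also minimum for $G$. Summing the two costs gives the claimed $O(f(n) + n^2)$ bound.

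There is no real obstacle; the only point that needs to be spelled out carefully is the commutativity of complementation with taking induced subgraphs, which makes the correspondence between deletion sets in $G$ and in $\overline G$ exact (both as feasibility and as cardinality). Everything else is bookkeeping on running time.
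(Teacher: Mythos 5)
Your proof is correct and is exactly the argument the paper intends (the paper states the proposition without proof, remarking only that the $n^2$ term accounts for computing the complement): complement the input, run the assumed algorithm, and use the fact that complementation commutes with taking induced subgraphs to transfer both feasibility and optimality. Nothing is missing.
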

We are now ready to summarize our results (besides Theorem~\ref{thm:biconnected}) in Figure~\ref{fig:overview}.
\begin{figure}[h]
  \centering\small
  \begin{tikzpicture}[every path/.style={thick}, scale = 1.2]
    \node[class] (chordal) at (2,6.5) {\hsc{chordal}};
    \node[class] (split) at (-0.5,5) {\hsc{split}};
    \node[class] (interval) at (4.5,5) {\hsc{interval}};
    \node[class] (threshold) at (-0.5,1.5) {\hsc{threshold}};
    \node[class] (tpg) at (1,3.5) {\hsc{trivially}\\\hsc{perfect}};
    \node[class] (uig) at (5, 3) {\hsc{unit interval}};
    \node[class] (block) at (2.65, 3.5) {\hsc{block}};
    \node[class] (cluster) at (3,1.5) {\hsc{cluster}};
    \node[class] (csplit) at (-1.5,0) {\hsc{complete split}};
    \node[class] (bottom) at (1.5,0) {\hsc{$\{2K_2, P_3\}$-free}};
    \node[class] (cochain) at (5,0) {\hsc{co-chain}};
    \draw[cyan] (csplit) -- (threshold);
    \draw[violet] (threshold) -- (split) node [midway,fill=white] {\scriptsize NPC};
    \draw[cyan] (split) -- (chordal) node [midway,fill=white] {\scriptsize P}; 
    \draw[violet] (interval) -- (chordal); 
    \draw[violet, dashed, -latex] (split) -- (interval) node [pos=.8,fill=white] {\scriptsize NPC};
    \draw (uig) -- (interval); 
    \draw[cyan] (cluster) -- (uig);
    \draw[cyan] (cluster) -- (block);
    \draw[violet] (block.north) -- (chordal); 
    \draw[violet, dashed, -latex] (split) -- (block.north) node [pos=.3,fill=white] {\scriptsize NPC};
    \draw[cyan] (bottom) -- (cluster);
    \draw[cyan] (bottom) edge (threshold);
    \draw[cyan, dashed, -latex] (split) -- (bottom) node [pos=.85, fill=white] {\scriptsize P};
    \draw[cyan, dashed, -latex] (split) -- (csplit) node [pos=.82, fill=white] {\scriptsize P};
    \draw[cyan, dashed, -latex] (interval) -- (cluster) node [pos=.2, fill=white] {\scriptsize P};   
    \draw (threshold) -- (tpg) -- (interval);
    \draw[cyan] (cochain) -- (uig);
    \draw[cyan, dashed, -latex] (chordal) -- (cochain) node [pos=.85, fill=white] {\scriptsize P};
  \end{tikzpicture}
  \caption{A summary of major graph classes studied by this paper and our results.  Two classes are connected by a solid edge when the lower one is a subclass of the higher one.  A directed dashed edge from ${\cal C}_1$ to ${\cal C}_2$ is used when ${\cal C}_2$ is not an immediate subclass of ${\cal C}_1$.  We omit here results implied by Proposition~\ref{lem:general-polynomial}; e.g., {${\cal C}$}\problem{}{co-chain} is in P for all $\cal C$, and \problem{chordal}{}{${\cal C}$} is NP-complete when $\cal C$ is \hsc{threshold}, \hsc{block}, or \hsc{interval}.  The cyan, violet, and black edges indicate that the complexity of the representing problems is in P, NP-complete, and unknown, respectively.}
  \label{fig:overview}
\end{figure}
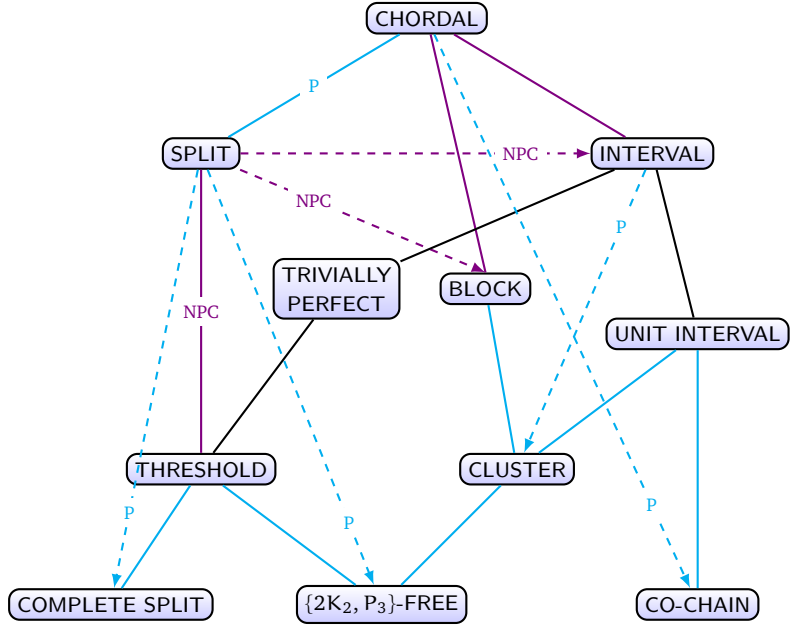

Unfortunately, we have to leave the complexity of some problems open, particularly \problem{chordal}{cluster}, 
 \problem{chordal}{unit interval}, and \problem{interval}{unit interval}.
Our final remarks are on the approximation algorithms, for which we are concerned with those not shown to be in P.  All of them have constant-ratio approximations, which follow from either \cite{cao-16-almost-interval-recognition, cao-17-unit-interval-editing} or the general observation of Lund and Yannakakis~\cite{lund-93-approximation-maximum-subgraph}.  On the other hand, none of the NP-complete problems admits a polynomial-time approximation scheme.

\section{Preliminaries}

All graphs discussed in this paper are undirected and simple.  A graph
$G$ is given by its vertex set $V(G)$ and edge set $E(G)$, whose
cardinalities will be denoted by $n$ and $m$ respectively.  
For a subset $X\subseteq V(G)$, denote by $G[X]$ the subgraph induced by $X$, and by $G - X$ the subgraph $G[V(G)\setminus X]$; we use $E(X)$ as a shorthand for $E(G[X])$, i.e., all edges among vertices in $X$.  For a subset $E_-\subseteq E(G)$ of edges, we use $G - E_-$ to denote the subgraph with vertex set $V(G)$ and edge set $E(G)\setminus E_-$.  We write $G - v$ and $G - e$ instead of $G - \{v\}$ and $G - \{e\}$ for $v\in V(G)$ and $e\in E(G)$ respectively.

For $\ell\ge 2$, we use $P_\ell$, $K_\ell$, and $I_\ell$ to denote an induced path, a clique, and an independent set, respectively, on $\ell$ vertices.  For $\ell\ge 4$, we use $C_\ell$ to denote an induced cycle on $\ell$ vertices; such a cycle is also called a \emph{hole}.  
  Some small graphs that will be used in this paper are depicted in Figure~\ref{fig:small-graphs}.  Note that $C_4$ and $2 K_2$ are complements to each other, while the complements of $P_4$ and $C_5$ are themselves.

We say that a graph $G$ \emph{contains} a subgraph $F$ if $F$ is isomorphic to some induced subgraph of $G$.  A graph is {\em $F$-free} if it does not contain $F$; for a set $\cal F$ of graphs, a graph $G$ is {\em $\cal F$-free} if it is {\em $F$-free} for every $F\in \cal F$.  Each set $\cal F$ defines a hereditary graph class, and every hereditary graph class can be defined as such; in other words, for any hereditary graph class $\cal C$, there is a (possibly infinite) set $\cal F$ of subgraphs such that a graph $G\in \cal C$ if and only if it is $\cal F$-free.
  Each graph $F$ in $\cal F$ is usually assumed to be minimal, in the sense that $F$ is not in $\cal C$ but every proper induced subgraph of $F$ is; they are called the \emph{minimal obstructions} of $\cal C$.    One should note that a minimal obstruction of a graph class may not be a minimal obstruction of its subclass; e.g., the minimal obstruction $C_5$ of \hsc{split} is not a minimal obstruction of \hsc{threshold}, because $C_5$ contains the non-threshold graph $P_4$ as a proper induced subgraph.
  
  The vertex deletion problem with object class $\cal C$ can also be defined as finding a maximum subgraph in the class $\cal C$.  For example, both vertex cover and independent set refer to the vertex deletion problem to the class \hsc{edgeless}, which is exactly the $K_2$-free graphs.  Although these formulations may behave different with respect to approximation, they are the same for our purpose.  We may use both formulations interchangeably, dependent on which is more convenient in the context.  Yet another way to view the vertex deletion problem toward property $\cal F$-free is to find a minimum set of vertices from a graph to hit all its induced subgraphs in $\cal F$.

We now define the graph classes we are going to study.  For the convenience of the reader, we collect the obstructions of all the graph classes and their containment relationships in Figure~\ref{fig:classes-containment} of the appendix.  Although the containment relationships of all the graph classes to be studied can be readily checked with their obstruction characterizations, sometimes it would be far more informative and inspiring if we look at them from the lens of the definitions and/or geometric representations of these graph classes.

A graph is \emph{chordal} if every cycle of length larger than three has a chord, i.e., an edge between two non-consecutive vertices of the cycle. 
A graph is an \emph{interval graph} if its vertices can be assigned to intervals on the real line such that there is an edge between two vertices if and only if their corresponding intervals intersect, and a \emph{unit interval graph} if all the intervals have the same length.
A graph $G$ is a \emph{trivially perfect graph }if for every induced subgraph of $G$, the size of the largest independent set is equivalent to the number of all maximal cliques \cite{golumbic-78-trivially-perfect}. 
Chordal graphs are precisely graphs that are intersection graphs of subtrees of a tree, while interval graphs are  intersection graphs of sub-paths of a path.  Therefore, \hsc{interval} $\subset$ \hsc{chordal}.  A trivially perfect graph can be represented by a set of \emph{non-overlapping} intervals; in other words, if two intervals intersect, then one is contained in the other.  Therefore, \hsc{trivially perfect} $\subset$  \hsc{interval}.

A graph is a \emph{cluster graph} if every component is a clique.  A graph is a \emph{block graph} if the deletion of all {cut vertices} leaves a cluster graph.  It is known that a graph is $\{2 K_2,  P_3\}$-free if it is a cluster graph of which at most one clique is nontrivial, i.e.,  having more than one vertex.  It is immediate from their definitions that \hsc{$\{2 K_2,  P_3\}$-free} $\subset$ \hsc{cluster} $\subset$ \hsc{block}.  Moreover, block graphs are precisely those chordal graph of which any two maximal cliques share at most one vertex.

A graph is a \emph{split graph} if its vertices can be partitioned into a clique $C$ and an independent set $I$, and a \emph{complete split graph} if every vertex in $C$ is adjacent to all vertices in $I$; we use $C\uplus I$ to denote the split partition.  Note that either of the two sets may be empty.
A graph $G$ is a \emph{threshold graph} if there is a real number t, the so-called \emph{threshold}, and an assignment $f: V(G)\to \mathbb{R}$ such that $u v\in E(G)$ if and only if $f(u) + f(v) \ge t$ \cite{chvatal-77-inequalities-in-ip}.  It is easy to verify that \hsc{complete split} $\subset$ \hsc{threshold} $\subset$ \hsc{split}: The first can be witnessed by $t = 1$ and assignment $f(v) = 1$ if $v\in C$ and $0$ otherwise; and the second by the clique partition $\{v: f(v) \ge t / 2\}\uplus \{v: f(v) < t / 2\}$.  Further, if we order the vertices in the independent set $I$ of a threshold graph such that 
\[
f(v_1) \le \cdots \le f(v_{|I|}) < t/2,
\]
then 
\[
N(v_1) \subseteq \cdots \subseteq N(v_{|I|}).
\]
Likewise, there is an ordering of vertices $u_1$, $\ldots$, $u_{|C|}$ in $C$ such that $N[u_1] \subseteq \cdots \subseteq N[u_{|C|}]$.

The reader may have noticed the striking resemblance between {split} graphs and {bipartite} graphs.  Indeed, if we add edges to make one side of a bipartite graph into a clique, we end with a split graph; or equivalently, given a split graph $G$ with split partition $C\uplus I$, the subgraph $G - E(C)$ is bipartite.  Clearly, $G - E(C)$ is a complete bipartite graph if and only if $G$  is a complete split graph.  
If $G$ is a threshold graph, then $G - E(C)$ is a \emph{chain graph} \cite{yannakakis-81-bipartite-node-deletion, yannakakis-81-minimum-fill-in}.  Finally,  \hsc{co-chain} denotes the complement of \hsc{chain}.

Recall that Yannakakis~\cite{yannakakis-81-bipartite-node-deletion} has given a dichotomy on the vertex deletion problem from bipartite graphs.  Inspired by this and the aforementioned connection between bipartite graphs and split graphs, a natural attempt at problems \problem{split}{}{$\cal C$} would be reducing them to the corresponding problem on bipartite graphs (for algorithms) or the other way (for hardness results).  This approach however turns out to be less straightforward as one may expect.

The first trouble is that a split graph can have many different split partitions, and thus can be mapped to many different bipartite graphs.  For instance, a naive reduction for the \problem{split}{complete split} problem is to the \problem{bipartite}{complete bipartite} problem, which can be solved in polynomial time.\footnote{We can find a maximum complete bipartite subgraph from a bipartite graph as follows.  We find a maximum independent set of $G$ and a maximum independent set of its {bipartite complement} (i.e., after taking its complement, we discard all edges among the two parts, so the resulting graph remains bipartite with the same partition), and then return the larger of them \cite{yannakakis-81-bipartite-node-deletion}.}  
As shown in Figure~\ref{fig:split-complete-split}, however, this reduction may end with a suboptimal solution.  Some remarks on this example are worthwhile.  The input graph in Figure~\ref{fig:split-complete-split} has a unique split partition.  However, $G - v_3$, which is the unique optimal solution, has four split partitions, of which only one is complete.  As we will see in the next section, this problem can still be solved efficiently by noticing that a split graph can have only a polynomial number of different split partitions, and all of them are very \emph{similar}.

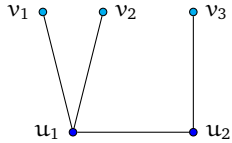
\begin{figure*}[h]
  \centering\small
    \begin{tikzpicture}[scale=.8]
      \node [vertex, "$u_1$" left] (b) at (-0.5,0) {};
      \node [vertex, "$u_2$" right] (c) at (1.5,0) {};
      \node [vertex, "$v_1$" left, fill=cyan] (v1) at (-1,2) {};
      \node [vertex, "$v_2$" right, fill=cyan] (v2) at (0,2) {};
      \node [vertex, "$v_3$" right, fill=cyan] (v3) at (1.5,2) {};
      \draw (b) -- (c);
      \draw (c) -- (v3) ;
      \draw (v1) -- (b) -- (v2);
    \end{tikzpicture}
  \caption{Given is a split graph $G$.  One only needs to delete vertex $v_3$ from $G$ to make it a complete split graph.  However, if we consider the bipartite graph $G - u_1 u_2$, its maximum complete bipartite subgraph has only three vertices.}
  \label{fig:split-complete-split}
\end{figure*}

The situation becomes even more gloomy when we consider the transformation from bipartite graphs to split graphs.  A bipartite graph can have an exponential number of bipartitions, and may be mapped to the same number of distinct split graphs.  Consider, for example, an attempt to find a reduction from the \problem{split}{diamond-free} problem to problem \problem{bipartite}{}{$\cal C$} for some subclass $\cal C$ of \hsc{bipartite}.  A diamond-free split graph admits a split partition $C\uplus I$ such that each vertex in $I$ has degree at most one.  A natural candidate for $\cal C$ is the disjoint union of stars, for which the \problem{bipartite}{}{$\cal C$} problem is known to be NP-complete~\cite{yannakakis-81-bipartite-node-deletion}.  However, the naive reduction would not work: Given a bipartite graph that is a disjoint union of starts, if we take a wrong bipartition and add edges to make it a split graph, we may introduce many diamonds.  As shown in Figure~\ref{fig:bipartite-stars}, even connectedness, which imposes a unique bipartition, would not save us here.

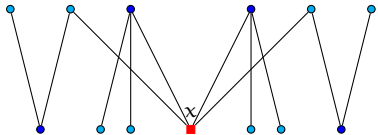
\begin{figure*}[h]
  \centering\footnotesize
    \begin{tikzpicture}[scale=.8]
      \begin{scope}[xshift=-2.5cm]
        \node [vertex] (b) at (-0.5,0) {};
        \node [vertex, fill=cyan] (v1) at (-1,2) {};
        \node [vertex, fill=cyan] (u1) at (0,2) {};
        \draw (v1) -- (b) -- (u1);
      \end{scope}
      \begin{scope}[xshift=-2cm, yshift=2cm, rotate=180]
        \node [vertex] (u2) at (-0.5,0) {};
        \node [vertex, fill=cyan] (v1) at (-0.5,2) {};
        \node [vertex, fill=cyan] (v2) at (0,2) {};
        \draw (v1) -- (u2) -- (v2);
      \end{scope}
      \begin{scope}[yshift= 2cm, rotate = 180]
        \node [vertex] (u3) at (-0.5,0) {};
        \node [vertex, fill=cyan] (v1) at (-1,2) {};
        \node [vertex, fill=cyan] (v2) at (-0.5,2) {};
        \draw (v1) -- (u3) -- (v2);
      \end{scope}
      \begin{scope}[xshift= 2.5cm]
        \node [vertex] (b) at (-0.5,0) {};
        \node [vertex, fill=cyan] (u4) at (-1,2) {};
        \node [vertex, fill=cyan] (v2) at (0,2) {};
        \draw (u4) -- (b) -- (v2);
      \end{scope}
      \node[fill=red, red, draw,inner sep=1.5pt, "$x$"] (v3) at (-0.5, 0) {};
      \draw (u1) -- (v3) -- (u2) (u4) -- (v3) -- (u3);
    \end{tikzpicture}
  \caption{Given is a bipartite graph $G$.  Deleting the vertex $x$ from it leaves a disjoint union of stars.  However, the graph has only two bipartitions, and from the split graphs decided by either of them, we need to delete at least two vertices to make it diamond-free.}
  \label{fig:bipartite-stars}
\end{figure*}

\section{Algorithmic results}

This section gives the polynomial-time algorithms.  Our focus would be laid on the use of structural properties, and if possible, we would present the simplest algorithms without elaborating on the implementation details.  These problems may have more efficient algorithms, and with more complex data structures and algorithmic finesses, some of them may even be solved in linear time.  

Our first two results are on split graphs, for which we need to put split partitions under scrutiny.  Let $C\uplus I$ be a split partition of a split graph $G$.  
  If some vertex in $I$ is completely adjacent to $C$, then we can move such a vertex $v$ to $C$ to make another split partition $C' = C\cup\{v\}$ and $I' = I\setminus\{v\}$.  Note that the vertex $v$ may not be unique, and the resulting graphs by moving them would be isomorphic.  Moreover, after such a move, no vertex of $I'$ can be completely adjacent to $C'$.  The following proposition fully characterizes split graphs with more than one different split partition.

\begin{proposition}\label{lem:split-partitions}
Let $G$ be a split graph with at least two split partitions, and let $C\uplus I$ and $C'\uplus I'$ be two different split partitions of $G$.  
\begin{enumerate}[(i)]
\item The difference between $|C|$ and $|C'|$ is at most  $1$.
\item If $|C| = |C'| + 1$, then $C$ is a maximum clique, and $I'$ is a maximum independent set of $G$; moreover, $C'\subset C$.
\item If $|C| = |C'|$, then  $G - E(C)$ and $G - E(C')$ are isomorphic.
\end{enumerate}
\end{proposition}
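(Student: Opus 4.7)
The proof will compare the two partitions $C \uplus I$ and $C'\uplus I'$ through their intersection structure. The one observation that drives everything is: any vertex set lying in the clique side of one partition and the independent side of the other is simultaneously a clique and an independent set, hence has size at most one. Applied to $C \setminus C' \subseteq C \cap I'$ and $C' \setminus C \subseteq C' \cap I$, this gives $|C \setminus C'|, |C' \setminus C| \le 1$ and immediately yields (i).

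For (ii), if $|C| = |C'| + 1$, then the bounds just proved force $|C \setminus C'| = 1$ and $|C' \setminus C| = 0$, so $C' \subsetneq C$. To show $C$ is a maximum clique, I will argue by contradiction: if some clique $K$ satisfies $|K| > |C|$, then since $I$ is independent we have $|K \cap I| \le 1$, forcing $K = C \cup \{w\}$ for some $w \in I$ completely adjacent to $C$. Then $(C \cup \{w\}) \uplus (I \setminus \{w\})$ is a split partition whose clique side has cardinality $|C| + 1 = |C'| + 2$, contradicting (i). The maximality of $I'$ follows by the symmetric argument on the independent side.

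Part (iii) is the substantive step. If $C = C'$ there is nothing to prove, so assume $|C \setminus C'| = |C' \setminus C| = 1$, and write $C \setminus C' = \{u\}$, $C' \setminus C = \{v\}$. Set $A = C \cap C'$ and $B = I \cap I'$, so that $V(G) = A \cup B \cup \{u,v\}$. I will read off the forced adjacencies from the two partition conditions: $A$ is a clique; $B$ is independent; both $u$ and $v$ are completely adjacent to $A$ (from $C$ and $C'$ being cliques); and neither $u$ nor $v$ has any neighbor in $B$ (from $I$ and $I'$ being independent). The remaining freedom lies in the edge $uv$ and in the $A$–$B$ bipartite pattern, both of which can be arbitrary.

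With this description in place, I claim the transposition $\phi \colon V \to V$ that swaps $u$ and $v$ and fixes everything else is an isomorphism from $G - E(C)$ to $G - E(C')$. The verification splits into a small case analysis on the edges that remain after removing $E(C)$ or $E(C')$: the $u$-to-$A$ edges present in $G$ are deleted in $G - E(C)$, whereas the $v$-to-$A$ edges survive, and $\phi$ interchanges these two pictures with the roles of $u$ and $v$ reversed in $G - E(C')$; the potential edge $uv$ is not inside either $C$ or $C'$, so it is preserved in both graphs and $\phi$ maps it to itself; the $A$–$B$ edges are untouched by both deletions and fixed by $\phi$. The main obstacle will simply be keeping this case analysis clean in the presence of the undetermined edge $uv$, but since $\phi$ fixes the pair $\{u,v\}$ setwise, the ambiguity causes no difficulty.
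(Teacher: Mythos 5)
Your proof is correct: the key observation that $C\setminus C'\subseteq C\cap I'$ and $C'\setminus C\subseteq C'\cap I$ are simultaneously cliques and independent sets (hence of size at most one) gives (i), the vertex-moving contradiction via (i) gives (ii), and the explicit transposition of $u$ and $v$ is indeed an isomorphism between $G-E(C)$ and $G-E(C')$ since $u$ and $v$ are both complete to $A=C\cap C'$ and both non-adjacent to $B=I\cap I'$, while the $A$--$B$ edges and the possible edge $uv$ survive both deletions. The paper states this proposition without proof, and your argument is exactly the standard one suggested by the paper's preceding discussion of moving a vertex between the two sides of a split partition, so nothing further is needed.
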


As a result, a split graph has either one or two essentially distinct split partitions.  On the other hand, of all split partitions of a complete bipartite graph, only one, whose independent set is the largest, satisfies the definition of complete bipartite graphs, and we will exclusively refer to it when we are discussing a complete split graph.

Let $G$ be a split graph with split partition $C\uplus I$ and let  $\underline G$ be a $\{2 K_2, P_3\}$-free subgraph of $G$.  If $\underline G$ has edges, all of them must be in the same nontrivial clique.  At most one vertex of this clique can be from $I$; therefore, all other vertices of $I$ either are deleted or become isolated in $\underline G$.  In other words, for each other vertex $v$ in $I$, either $v$ or all its neighbors have to be deleted.
\begin{figure}[h!]
  \centering
  \tikz\path (0,0) node[draw, text width=.8\textwidth, rectangle, rounded corners, inner xsep=20pt, inner ysep=10pt]{
    \begin{minipage}[t!]{\textwidth}
  {\sc Input}: a split graph $G$ on split partition $C\uplus I$.
  \\
  {\sc Output}: a minimum set $V_-\subseteq V(G)$ such that $G - V_-$ is {$\{2 K_2, P_3\}$-free}.
  \begin{tabbing}
    Aaa\=aaA\=Aaa\=MMMMMMAAAAAAAAAAAAAAAAAAAAAAAAA\=A \kill
    0.\> ${\cal S}\leftarrow \emptyset$;
    \\
    1.\> build a bipartite graph $G'$ by removing all edges among $C$ from $G$;
    \\
    2.\> find a minimum vertex cover of $G'$, and add it to ${\cal S}$;
    \\
    3.\> {\bf  for each} $v\in I$ {\bf do} 
    \\
    \>\> find a minimum vertex cover $X$ of $G' - (C\setminus N(v)) - v$; 
    \\
    \>\> add $X \cup (C\setminus N(v))$ to $\cal S$;
    \\
    4.\> {\bf return} a set in $\cal S$ with the minimum cardinality.
  \end{tabbing}  
    \end{minipage}
  };
  \caption{Algorithm for \problem{split}{$\{2 K_2, P_3\}$-free}.}
  \label{fig:split-cluster}
\end{figure}

\begin{theorem}\label{thm:split-cluster}
  The \problem{split}{$\{2 K_2, P_3\}$-free} problem is in P.
\end{theorem}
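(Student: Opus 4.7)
The plan is to justify correctness of the algorithm in Figure~\ref{fig:split-cluster} via a structural case analysis on the output, in each case reducing to a minimum vertex cover problem on a bipartite graph.

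First I would record the standard characterization: a $\{2K_2, P_3\}$-free graph consists of isolated vertices together with at most one nontrivial clique $K$. Hence for any deletion set $V_-$, the (at most one) nontrivial clique of $\underline G = G - V_-$ contains at most one vertex of $I$, since $I$ is independent in $G$. This yields two cases that mirror the two branches of the algorithm: either (A) no vertex of $I$ lies in a nontrivial clique of $\underline G$ (which subsumes the edgeless subcase), or (B) a unique $v \in I$ lies in the nontrivial clique.

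In Case~A, every edge of $G$ between $C$ and $I$ must be killed by $V_-$, so $V_-$ is a vertex cover of the bipartite graph $G' = G - E(C)$; conversely, any such vertex cover $U$ yields $\underline G$ equal to the clique $C \setminus U$ together with the isolated vertices in $I \setminus U$, which is $\{2K_2, P_3\}$-free. Thus step~2 is optimal within Case~A. In Case~B, I would prove a mandatory-deletion lemma: every $u \in C \setminus N(v)$ must belong to $V_-$. Indeed, any undeleted such $u$ would need to be isolated in $\underline G$, yet $u$ is adjacent in $G$ to all other members of $C$, including any surviving vertex of $K \cap C$ (which is nonempty since $|K| \ge 2$ and $K \cap I = \{v\}$), producing an edge outside $K$. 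Once the mandatory set $C \setminus N(v)$ is removed, the remaining obligation is that $V_- \setminus (C \setminus N(v))$ cover every remaining edge between $N(v) \cap C$ and $I \setminus \{v\}$, which is exactly a bipartite minimum vertex cover on $G' - (C \setminus N(v)) - v$. So for each guess $v$, step~3 finds the optimum solution consistent with Case~B.

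The main obstacle is the mandatory-deletion argument in Case~B; it hinges on $C$ being a clique of $G$, and on observing that $K \cap C$ is forced to be nonempty. Once this is settled, correctness follows by noting that every optimal $V_-$ fits Case~A or Case~B and is recovered by the corresponding branch. Polynomial running time comes from the classical fact that minimum vertex cover on bipartite graphs is polynomial via the K\"onig--Egerv\'ary theorem and bipartite matching, combined with the fact that the algorithm performs only $1 + |I|$ such computations before returning the best candidate in $\cal S$.
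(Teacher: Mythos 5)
Your proposal is correct and follows essentially the same route as the paper's proof: the same two-case split (no vertex of $I$ keeps a surviving neighbor versus a unique such $v$), the same forced deletion of $C\setminus N(v)$, and the same reduction of each branch to bipartite minimum vertex cover over $O(n)$ guesses. The only cosmetic difference is that you argue via the ``one nontrivial clique plus isolated vertices'' characterization where the paper invokes the $P_3$ obstruction directly, which amounts to the same thing.
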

\begin{proof}
  Let $G$ be the input graph to the \problem{split}{$\{2 K_2, P_3\}$-free} problem and let $C\uplus I$ be a split partition of $G$.  We use the algorithm in Figure~\ref{fig:split-cluster} to find a minimum solution to $G$.  To argue its correctness, we show that (i) every set in $\cal S$, added in step~2 or 3, is a solution to $G$, and (ii) at least one of them is minimum.  For (i), it is easy to verify that any vertex cover of $G' = G - E(C)$ is a solution: There is no edge between $C$ and $I$ after its deletion.  The situation in step~3 is similar; note that $N[v] \uplus (I\setminus \{v\})$ is a split partition of $G - \big( C\setminus N(v) \big)$.

  Let $V_-$ be a minimum solution to $G$.  In the first case, every vertex $v\in I\setminus V_-$ is isolated in $G - V_-$.  In other words, $V_-$ contains a vertex cover of $G' = G - E(C)$, and then the solution found by step~2 is already the minimum.  Henceforth we assume that there exists a vertex $v\in I\setminus V_-$ such that $N(v)\not\subseteq V_-$.  Since any vertex $u\in N(v)$ and $w\in C\setminus N(v)$ induce a $P_3$ with $v$, in this case all vertices in $C\setminus N(v)$ must be in $V_-$.    Note that the vertex $v$ is unique: If two vertices in $I\setminus V_-$ have neighbors in $C\setminus V_-$, then they are in a non-clique component.  Therefore, after removing $C\setminus N(v)$ and $v$ from the graph, it reduces to the first case.  This justifies step~3.
 
The algorithm makes $O(n)$ calls to an algorithm for the bipartite vertex cover problem, each  taking $O(m\sqrt{n})$ time, and hence the whole algorithm runs in $O(m n \sqrt{n})$ time.
\end{proof}

Noting that \hsc{split} $\cap$ \hsc{cluster} is precisely \hsc{$\{2 K_2, P_3\}$-free}, we can apply the algorithm of Theorem~\ref{thm:split-cluster} to the \problem{split}{cluster} problem.  Moreover, since \hsc{split} is self-complementary, while the complement of  \hsc{$\{2 K_2, P_3\}$-free} is \hsc{complete split}, it follows from Proposition~\ref{lem:complement} that the \problem{split}{complete split} problem is also in P.  
\begin{corollary}\label{lem:split-complete-split}
  Problems \problem{split}{cluster} and \problem{split}{complete split} are in $P$.  
\end{corollary}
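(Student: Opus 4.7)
The plan is to derive both statements from Theorem~\ref{thm:split-cluster} together with two standard tools already set up earlier in the paper, without any new combinatorial work.

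For \problem{split}{cluster}, the first step is to verify the set-theoretic identity \hsc{split} $\cap$ \hsc{cluster} $=$ \hsc{$\{2K_2, P_3\}$-free}. The direction ``$\supseteq$'' is recorded in Section~2, where a $\{2K_2, P_3\}$-free graph is described as a cluster graph with at most one nontrivial clique (which is trivially split). For the reverse direction, every cluster graph is $P_3$-free, and a split graph cannot contain $2K_2$ as an induced subgraph because the four vertices of $2K_2$ admit no partition into a clique and an independent set (both the clique number and the independence number are only $2$). Once this equality is in hand, observe that \hsc{split} is hereditary, so for any input $G \in \hsc{split}$ the deletion residue $G - V_-$ is again split, and it lies in \hsc{cluster} exactly when it lies in the intersection. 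Hence a solution to \problem{split}{cluster} on $G$ is precisely a solution to \problem{split}{$\{2K_2, P_3\}$-free} on $G$, and Theorem~\ref{thm:split-cluster} supplies the algorithm, in the same $O(mn\sqrt{n})$ time.

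For \problem{split}{complete split}, I would pass to complements and reuse the same algorithm. As noted in the preliminaries, \hsc{split} is self-complementary, so $\overline{\hsc{split}} = \hsc{split}$, and the complement of \hsc{$\{2K_2, P_3\}$-free} is \hsc{complete split}. Plugging these two identifications into Proposition~\ref{lem:complement}, the algorithm of Theorem~\ref{thm:split-cluster} transforms into one for \problem{split}{complete split} with only an additional $O(n^2)$ overhead to form the complement of the input.

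There is no genuine obstacle in either reduction; the argument is a bookkeeping exercise about class inclusions and complementation. The single point requiring a moment's care is the equality \hsc{split} $\cap$ \hsc{cluster} $=$ \hsc{$\{2K_2, P_3\}$-free}, and this is immediate from the observation that $2K_2$ is not a split graph.
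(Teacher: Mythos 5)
Your proposal is correct and follows essentially the same route as the paper: identify \hsc{split} $\cap$ \hsc{cluster} with \hsc{$\{2K_2, P_3\}$-free} and invoke Theorem~\ref{thm:split-cluster}, then handle \problem{split}{complete split} via self-complementarity of \hsc{split} and Proposition~\ref{lem:complement}. The extra details you supply (that $2K_2$ is not split, and the complementation of the obstruction set) are just explicit verifications of facts the paper cites implicitly.
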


A similar observation as that of the proof Theorem~\ref{thm:split-cluster} can be used to solve the \problem{split}{unit interval} problem.  We start from a simple property of connected graphs in \hsc{split} $\cap$ \hsc{unit interval}.

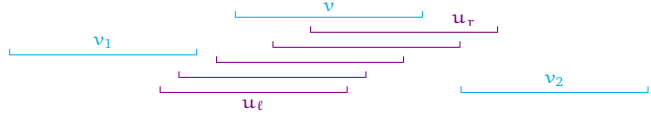
\begin{figure}[h]
  \centering\scriptsize
  \begin{tikzpicture}
    \draw[violet, {|[right]}-{|[left]}]  (2, 0)  to ["$u_\ell$" below] (4.5, 0);
    \draw[violet, {|[right]}-{|[left]}]  (4, 0.8) to ["$u_r$" xshift=8mm] (6.5, 0.8);
    \draw[violet, {|[right]}-{|[left]}]  (2.25, 0.2)  to (4.75,0.2);
    \draw[violet, {|[right]}-{|[left]}]  (2.75,0.4)  to (5.25,0.4);
    \draw[violet, {|[right]}-{|[left]}]  (3.5,0.6)  to (6,0.6);
    \draw[cyan, {|[right]}-{|[left]}]  (0,0.5) to ["$v_1$"] (2.5,0.5);
    \draw[cyan, {|[right]}-{|[left]}]  (6,0) to ["$v_2$"] (8.5,0);
    \draw[cyan, {|[right]}-{|[left]}]  (3,1) to ["$v$"] (5.5,1);
  \end{tikzpicture}
  \caption{A connected split graph with split partition $C\uplus I$ that is also a unit interval graph. Violet intervals are for vertices in $C$ and cyan for $I$.  Note that the vertex $v$ from $I$ is completely adjacent to $C$.} 
  \label{fig:split-and-uig}
\end{figure}
\begin{proposition}\label{lem:split+uig}
  Let $G$ be a connected split graph and let  $C\uplus I$ be a split partition of $G$.  If $G$ is a unit interval graph, then $|I| \le 3$, and the equality holds only when there is a vertex $v\in I$ adjacent to all vertices in $C$.  
\end{proposition}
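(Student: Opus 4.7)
The plan is to exploit two consequences of $G$ being a unit interval graph: claw-freeness, and net-freeness (the three degree-one vertices of the net form an asteroidal triple, so the net is not even an interval graph). I fix a unit interval representation and sort $I=\{v_1,\ldots,v_k\}$ by left endpoints. Since $I$ is independent, the intervals of the $v_i$'s are pairwise disjoint, and unit length prevents any $c\in C$ from intersecting two $v_i,v_j$ with $|i-j|\ge 2$. Thus every $c\in C$ falls into exactly one disjoint ``type'': $C_0$ (no $I$-neighbor), $C_i$ (private to $v_i$), or $C_{i,i+1}$ (shared by consecutive $v_i,v_{i+1}$). Connectedness guarantees each $v_i$ has at least one $C$-neighbor.

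Net-freeness yields the key restriction: if three of $C_1,\ldots,C_k$ were nonempty, picking private representatives $c_i,c_j,c_l$ would produce an induced net on $\{c_i,c_j,c_l,v_i,v_j,v_l\}$. Hence at most two of $C_1,\ldots,C_k$ are nonempty.

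For $k\ge 4$, I first argue $C_k\neq\emptyset$: otherwise all $C$-neighbors of $v_k$ lie in $C_{k-1,k}$, and for any such $c$ the only way to avoid a claw $\{c,v_{k-1},v_k,c'\}$ is to force $C=N(v_{k-1})\cup N(v_k)$, which collapses $C_1$ and $C_{1,2}$ to $\emptyset$ and strands $v_1$. Symmetrically $C_1\neq\emptyset$, so the net bound forces $C_2=\cdots=C_{k-1}=\emptyset$. Since $v_2$ still needs a $C$-neighbor, $C_{1,2}\cup C_{2,3}\neq\emptyset$; picking one such $c$ together with any $c'\in C_k$ yields an induced claw on $\{c,v_1,v_2,c'\}$ or $\{c,v_2,v_3,c'\}$, because $c'$ has no $I$-neighbor in $\{v_1,v_2,v_3\}$. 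This contradiction establishes $|I|\le 3$.

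For $k=3$, the goal is $C_0=C_1=C_3=\emptyset$, which immediately gives $N(v_2)=C_{1,2}\cup C_2\cup C_{2,3}=C$. A hypothetical $c_0\in C_0$ paired with any $c\in C_{1,2}\cup C_{2,3}$ produces a claw; and if both shared classes are empty then each $v_i$ requires a private $C$-neighbor, giving a net. The case $C_1\neq\emptyset$ (and symmetrically $C_3$) is dispatched by pairing $c_1\in C_1$ with a vertex of $C_{2,3}$ to form the claw $\{c,v_2,v_3,c_1\}$; the residual possibility that $C_{2,3}$ is empty forces $C_3\neq\emptyset$ via $v_3$, and then either $C_2\neq\emptyset$ yields a net on $\{C_1,C_2,C_3\}$, or $C_{1,2}$ yields a vertex $c$ giving the claw $\{c,v_1,v_2,c_3\}$. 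The main obstacle is keeping the case bookkeeping orderly; the real substance is the consecutivity of $I$-neighborhoods in the unit interval representation, after which each subcase closes by exhibiting an explicit claw or net.
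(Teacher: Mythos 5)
Your proof is correct, but it takes a genuinely different route from the paper's. The paper first reduces to the case where $C$ is a maximum clique (via its Proposition~\ref{lem:split-partitions}: any other split partition is obtained by moving a single vertex of $I$ that is complete to $C$, which is exactly where the equality condition comes from), and then gives a four-line endpoint argument in the unit interval model showing $|I|\le 2$ in that case: a third independent vertex would have to nest between the leftmost and rightmost independent intervals and hence contain $[\lp{u_r},\rp{u_\ell}]$, enlarging the maximum clique. You instead work with the given partition directly: you use the model only to establish the consecutivity typing of $C$ into $C_0$, $C_i$, $C_{i,i+1}$, and then close every case combinatorially by exhibiting an induced claw or net (both of which are indeed obstructions for unit interval graphs, as recorded in the paper's appendix). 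I checked your cases and they all go through: the net bound on private classes, the $k\ge 4$ argument (if $C_k=\emptyset$ then claw-avoidance forces $C=N(v_{k-1})\cup N(v_k)$ and strands $v_1$; then $C_1,C_k\neq\emptyset$ kill the middle private classes and a vertex of $C_{1,2}\cup C_{2,3}$ plus one of $C_k$ gives a claw), and the $k=3$ elimination of $C_0$, $C_1$, $C_3$ yielding $C\subseteq N(v_2)$, which is a slightly stronger conclusion (the complete vertex is the middle one) than the statement requires. What your approach buys is self-containedness---no appeal to the split-partition proposition---and an explicit forbidden-subgraph explanation of why the bound holds; what it costs is the heavier case bookkeeping, where the paper's reduction to the maximum-clique case makes the whole thing essentially one inequality chain. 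One presentational remark: the consecutivity claim (a unit interval cannot meet two non-consecutive intervals of the independent set) deserves its one-line justification---a unit interval meeting both would properly contain the unit interval in between---since it is the only place the unit-length hypothesis enters.
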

\begin{proof}
  We prove $|I| \le 2$ if $C$ is a maximum clique of $G$, and then the proposition follows from Proposition~\ref{lem:split-partitions}(i).
  Let $u_\ell$ and $u_r$ be the vertices in $C$ with respectively the leftmost and rightmost intervals.  Suppose for contradiction $|I| > 2$.  Let $v_1$ and $v_2$ be the vertices in $I$ with respectively the leftmost and rightmost intervals.  Then $\lp{u_\ell} < \rp{v_1} < \lp{u_r} < \rp{u_\ell} < \lp{v_2} < \rp{u_r}$, where the second and the fourth inequalities follow from that $C$ is a maximum clique, and the others from the selections of the four vertices.  Since $G$ is connected, the interval for any other vertex $v$ in $I\setminus \{v_1, v_2\}$, which is nonempty, has to lie in ($\rp{v_1}, \lp{v_2}$).  But then it has to contain $[\lp{u_r}, \rp{u_\ell}]$, and $\{v\}\cup C$ is a clique, contradicting that $C$ is a maximum clique of $G$.
\end{proof}

Similar as Theorem~\ref{thm:split-cluster}, our algorithm for \problem{split}{unit interval} separates into two cases, based on whether there is a vertex of $I\setminus V_-$ adjacent to all vertices in $C\setminus V_-$.
\begin{figure}[h!]
  \centering
  \tikz\path (0,0) node[draw, text width=.8\textwidth, rectangle, rounded corners, inner xsep=20pt, inner ysep=10pt]{
    \begin{minipage}[t!]{\textwidth}
  {\sc Input}: a split graph $G$ on split partition $C\uplus I$.
  \\
  {\sc Output}: a minimum set $V_-\subseteq V(G)$ such that $G - V_-$ is a unit interval graph.
  \begin{tabbing}
    Aaa\=aaA\=Aaa\=MMMMMMAAAAAAAAAAAAAAAAAAAAAAAAA\=A \kill
    0.\> ${\cal S}\leftarrow \emptyset$;
    \\
    1.\> solve the \problem{split}{$\{2K_2, P_3\}$-free} problem on $G$; add the solution to ${\cal S}$;
    \\
    \> \comment{ \bf case 1:}
    \\
    2.\> {\bf for each} $v\in I$ {\bf do}
    \\
    \>\>
    find a minimum vertex cover of $G - v - E(C)$, and add it to ${\cal S}$;
    \\
    3.\> {\bf for each} $v_1, v_2\in I$ {\bf do}
    \\
    3.1.\>\> $G' \leftarrow G - \{v_1, v_2\} - E(C)$;
    \\
    3.2.\>\> find a minimum vertex cover of $G' - N(v_1)\cap N(v_2)$,
    \\
    \>\> and add its union with $N(v_1)\cap N(v_2)$ to ${\cal S}$;
    \\
    3.3.\>\> find a minimum vertex cover of $G' - C\setminus \big( N(v_1)\cup N(v_2) \big)$,
    \\
    \>\> and add its union with $C\setminus \big( N(v_1)\cup N(v_2) \big)$ to ${\cal S}$;
    \\
    \> \comment{ \bf case 2:}
    \\
    4.\> {\bf for each} $v\in I$ {\bf do} 
    \\
    \>\> $G'' \leftarrow G - \big(C\setminus N(v) \big)$ with split partition $N[v]$ and $I\setminus \{v\}$;
    \\
    \>\> solve $G''$ as case 1, but append $C\setminus N(v)$ to each solution found;
    \\
    5.\> {\bf return} a set in $\cal S$ with the minimum cardinality.
  \end{tabbing}  
    \end{minipage}
  };
  \caption{Algorithm for \problem{split}{unit interval}.}
  \label{fig:split-uig}
\end{figure}
\begin{theorem}\label{split-uig}
  The \problem{split}{unit interval} problem is in $P$.
\end{theorem}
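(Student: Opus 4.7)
The plan is to prove correctness by a case analysis on the structure of any fixed optimum $V_*$, guided by Proposition~\ref{lem:split+uig}. Writing $C^* = C\setminus V_*$ and $I^* = I\setminus V_*$, and letting $I^*_1\subseteq I^*$ be the vertices that still have a neighbour in $C^*$ (equivalently, the non-isolated vertices of $I^*$ in $G-V_*$), the subgraph $G[C^*\cup I^*_1]$ is a connected split graph that is also unit interval; Proposition~\ref{lem:split+uig} therefore gives $|I^*_1|\le 3$, with equality only if some $v\in I^*_1$ is adjacent to every vertex of $C^*$. The four possible values of $|I^*_1|$ correspond to the four branches of the algorithm in Figure~\ref{fig:split-uig}.

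If $|I^*_1|=0$ then $G-V_*$ is a clique plus isolated vertices, hence $\{2K_2,P_3\}$-free, so step~1 together with Theorem~\ref{thm:split-cluster} returns a candidate of size at most $|V_*|$. If $|I^*_1|=1$ with unique element $v$, every other vertex of $I$ must be isolated in $G-V_*$, so $V_*\setminus\{v\}$ is a vertex cover of the bipartite graph $G-v-E(C)$; conversely, for any such vertex cover $X$ with $v\notin X$, the graph $G-X$ is the clique $C\setminus X$ with the single pendant $v$ attached to $N(v)\cap(C\setminus X)$ plus isolated vertices, and a direct interval placement shows this structure is unit interval. Hence step~2 suffices.

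The two-pendant case $|I^*_1|=2$ is the heart of the argument. With $A=N(v_1)\cap C^*$ and $B=N(v_2)\cap C^*$, if both $A\cap B$ and $C^*\setminus(A\cup B)$ were nonempty then a vertex $c\in A\cap B$ together with $v_1,v_2$ and any $c'\in C^*\setminus(A\cup B)$ would induce a claw $K_{1,3}$ in $G-V_*$, forbidden in any unit interval graph. Consequently either $A\cap B=\emptyset$, forcing $N(v_1)\cap N(v_2)\subseteq V_*$, or $A\cup B=C^*$, forcing $C\setminus(N(v_1)\cup N(v_2))\subseteq V_*$. In each sub-case, after committing to the forced deletion, the only remaining requirement---that every $u\in I\setminus\{v_1,v_2\}$ retained be isolated from the surviving clique---is exactly a bipartite minimum vertex cover on the appropriately restricted graph, which is precisely what steps~3.2 and~3.3 compute.

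Finally, if $|I^*_1|=3$ then Proposition~\ref{lem:split+uig} supplies some $v\in I^*_1$ adjacent to all of $C^*$, forcing $C\setminus N(v)\subseteq V_*$; step~4 deletes this set and recurses on $G''$ with split partition $N[v]\uplus(I\setminus\{v\})$. Since $v$ now lies in the new clique and is non-adjacent to every $u\in I\setminus\{v\}$, no vertex of $I\setminus\{v\}$ can be universal in $G''$, so the recursive optimum obeys $|I^*_1|\le 2$ and is resolved by the preceding branches. The main obstacle left is the converse verification---that every candidate the algorithm outputs is genuinely unit interval. This reduces to representing ``a clique $C'$ with at most two attached $I$-vertices whose neighbourhoods are either disjoint or jointly cover $C'$, plus arbitrarily many isolated vertices'' as the intersection of unit intervals; placing the two pendants at the two extremes of the line and interleaving the clique intervals between them, with each pendant's private neighbours on its side, gives such a representation. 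The running time is polynomial: there are $O(n^2)$ guesses across all branches, each followed by a bipartite minimum vertex cover in $O(m\sqrt n)$ and one call to the algorithm of Theorem~\ref{thm:split-cluster}.
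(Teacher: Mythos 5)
Your proof is correct and takes essentially the same route as the paper's: the same algorithm of Figure~\ref{fig:split-uig}, with correctness argued via Proposition~\ref{lem:split+uig}, the claw argument in the two-pendant case, the forced deletion of $C\setminus N(v)$ and re-partition around a universal independent vertex to reduce the last case to the earlier ones, and soundness by direct unit interval model constructions; your organization by $|I^*_1|\in\{0,1,2,3\}$ is just a repackaging of the paper's two cases. The only slip is the count of guesses: step~4 nests the $O(n^2)$ pair enumeration inside an $O(n)$ loop, so there are $O(n^3)$ (not $O(n^2)$) vertex-cover calls, which does not affect polynomiality.
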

\begin{proof}
  Let $G$ be the input graph to the \problem{split}{unit interval} problem and let $C\uplus I$ be a split partition of $G$.  We use the algorithm in Figure~\ref{fig:split-uig} to find a solution.  To argue its correctness, we show that all sets put into $\cal S$ in steps~1--4 are solutions to $G$, and at least one of them is minimum.  It is clear for step~1.  After the deletion of a solution found in step~2, only $v$ in $I$ remains adjacent to the remaining vertices of $C$.  In step~3, only $v_1$ and $v_2$ from $I$ can remain adjacent to vertices in $C$.   In step~3.2, no vertex in $C$ is adjacent to both $v_1$ and $v_2$; in step~3.3, every vertex in $C$ is adjacent to at least one of $v_1$ and $v_2$.  In either case, it is easy to verify that the graph is a unit interval graph by building a unit interval model directly.  Step~4 follows from the same argument as above: After the deletion of $C\setminus N(v)$, it reduces to one of the three previous steps. 

  Let $V_-$ be a minimum solution to $G$.   If $G - V_-$ is $\{2K_2, P_3\}$-free, then the solution found by step~1 is the minimum.  Henceforth we assume that $G - V_-$ contains a non-clique component $U$; note that such a component contains all vertices in $C\setminus V_-$ and hence is unique.

  In the first case, every vertex $v\in U\cap I$ has at least one non-neighbor in $C\setminus V_-$, i.e., $N(v) \setminus V_-\subset C\setminus V_-$.  According to Proposition~\ref{lem:split+uig}, $|U\cap I| \le 2$.  If $U\cap I = \{v\}$, then $G - (V_- \cup \{v\})$ is $\{2 K_2, P_3\}$-free and the only nontrivial clique $U\setminus\{v\}$ is a subset of $C$; hence step~2 always find a minimum solution.  In the rest of this case, $U\cap I$ has two different vertices; let them be $v_1$ and $v_2$.  Since any $u_1\in N(v_1)\cap N(v_2)$ and $u_2\in C\setminus \big( N(v_1)\cup N(v_2) \big)$ induce a claw with $\{v_1, v_2\}$, at least one of the two sets needs to be empty or completely contained in $V_-$.  Steps~3.2 and 3.3 take care of these two situations separately.

We are now in the second case, where $C\setminus V_-\subseteq N(v)$ for some vertex $v\in I\setminus V_-$; in other words, $V_-$ contains all vertices in $C\setminus N(v)$.  There might be two of such vertices, when we can take $v$ to be either of them.  Clearly,  $N[v]$ and $I\setminus \{v\}$ is then a split partition of $G'' = G - \big(C\setminus N(v) \big)$, which has a solution $V_-\setminus \big(C\setminus N(v) \big)$.  Moreover, under this new split partition, we reduce it to the first case.
 
The algorithm makes $O(n^3)$ calls to the algorithm for the bipartite vertex cover problem, each  taking $O(m\sqrt{n})$ time, and hence the whole algorithm runs in $O(m n^{3.5})$ time.
\end{proof}

We now turn to problems whose inputs are interval graphs, for which we rely on interval models.  Recall that an interval model for an interval graph is a set of intervals representing its vertices.  In this paper, all intervals are closed.  An interval model can be specified by the $2 n$ endpoints for the $n$ intervals, the interval for vertex $v$ being by $[\lp{v}, \rp{v}]$.

For the \problem{unit interval}{complete split} problem, the clique is from some maximal clique of the input graph $G$ and can be enumerated.  On the other hand, according to Proposition~\ref{lem:split+uig}, there are at most three vertices in the independent set, which can be easily found.  However, for interval graphs, it can be more complicated.

\begin{figure}[h]
  \centering\scriptsize
  \begin{tikzpicture}[scale=.6]
    \draw[dashed] (.8, 0) -- (24, 0);
    \draw[dashed] (0.8,4.5) -- (24,4.5);

    \draw[violet, {|[right]}-{|[left]}]  (1.5,1.15)  to (10.5,1.15); 
    \draw[violet, {|[right]}-{|[left]}]  (1.6,1.35)  to (14,1.35); 
    \draw[violet, {|[right]}-{|[left]}]  (1.7,1.55)  to (14.5,1.55); 
    \draw[violet, {|[right]}-{|[left]}]  (3.5,1.75)  to (16.5,1.75); 
    \draw[violet, {|[right]}-{|[left]}]  (5,1.95)  to (17.4,1.95);
    \draw[violet, {|[right]}-{|[left]}]  (8,2.15) to (20.5,2.15);
    \draw[violet, {|[right]}-{|[left]}]  (11.5,1.15) to (21.5,1.15);
    \begin{scope}[xshift= 2.3cm, yshift= .2cm]
     \foreach \n in {1, ..., 5}{
       \draw[thick, violet,{|[right]}-{|[left]}] (\n * .1, \n * .15) to (20 + \n * .1, \n * .15);
     }
     \end{scope}
    
    \begin{scope}[yshift=1.4cm]
      \foreach \x/\y/\l in {1/1/2, 2/2/2.5, 4/1/2.5, 6/2/4, 7./1/1.5, 9.5/1/2.5, 11/2/2.5, 13/1/2.5, 15/2/2.5, 17/1/2.5, 19/2/2.5, 21/1/2.5}{
        \foreach \n in {0, ..., 4}{
          \draw[cyan,{|[right]}-{|[left]}] (\x + \n * .05, \y + \n * .15) to (\x + \l + \n * .05, \y + \n * .15);
        }
      }    
    \draw[thick, cyan, {|[right]}-{|[left]}]  (1, 1) to ["$v_\ell$" below] (3, 1);
    \draw[thick, cyan, {|[right]}-{|[left]}]  (2.2, 2.6) to (4.7, 2.6);
    \draw[thick, cyan, {|[right]}-{|[left]}]  (7., 1) -- (8.5, 1);
    \draw[thick, cyan, {|[right]}-{|[left]}]  (7.2, 1.6) -- (8.7
    , 1.6);
    \draw[thick, cyan, {|[right]}-{|[left]}]  (9.5, 1) to (12, 1);
    \draw[thick, cyan, {|[right]}-{|[left]}]  (11.2, 2.6) -- (13.7, 2.6);
    \draw[thick, cyan, {|[right]}-{|[left]}]  (15, 2) -- (17.5, 2);
    \draw[thick, cyan, {|[right]}-{|[left]}]  (17.2, 1.6) -- (19.7, 1.6);
    \draw[thick, cyan, {|[right]}-{|[left]}]  (21, 1) to ["$v_r$" below] (23.5, 1);
    \draw[thick, cyan, {|[right]}-{|[left]}]  (21.2, 1.6) -- (23.7, 1.6);
     \end{scope}
    \draw[dashed]  (1,2.4) to (1,4.4) node[above] {$\alpha_1$};
    \draw[dashed]  (4.7,4) to (4.7,4.4) node[above] {$\beta_1$};
    \draw[dashed]  (7,2.4) to (7,4.4) node[above] {$\alpha_2$};
    \draw[dashed]  (8.7,3) to (8.7,4.4) node[above] {$\beta_2$};
    \draw[dashed]  (9.5,2.4) to (9.5,4.4) node[above] {$\alpha_3$};
    \draw[dashed]  (13.7,4) to (13.7,4.4) node[above] {$\beta_3$};
    \draw[dashed]  (15,3.4) to (15,4.4) node[above] {$\alpha_4$};
    \draw[dashed]  (19.7,3) to (19.7,4.4) node[above] {$\beta_4$};
    \draw[dashed]  (21,2.4) to (21,4.4) node[above] {$\alpha_5$};
    \draw[dashed]  (23.7,3) to (23.7,4.4) node[above] {$\beta_5$};

    \draw[dashed]  (3,2.4) to (3, 0) node[below] {$\alpha$};
    \draw[dashed]  (21,2.4) to (21, 0) node[below] {$\beta$};

  \end{tikzpicture}
  \caption{Given is an interval model for an interval graph $G$.  \\The top line is for the \problem{interval}{cluster} problem.  A maximum cluster subgraph of $G$ has five cliques, each specified by a pair of $\alpha_i$ and $\beta_i$.  (In this example, the maximum clique in each range $[\alpha_i, \beta_i]$ comprises all intervals in this range.)  \\The bottom  line is for the \problem{interval}{complete split} problem.  A maximum complete split subgraph of $G$ contains 12 vertices.  The clique contains the vertices represented by the lowest five intervals, and the independent set contains $v_\ell$ and $v_r$, together with a maximum independent set of all intervals completely lying in $[\alpha, \beta]$.}
  \label{fig:interval-cluster}
\end{figure}
\begin{theorem}
  Problems \problem{interval}{complete split} and \problem{interval}{cluster} are in P.
\end{theorem}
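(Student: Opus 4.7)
Fix an interval model of the input graph $G$; both algorithms will enumerate a polynomial-size family of structural parameters and combine this with standard polynomial-time subroutines on interval graphs.

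For \problem{interval}{cluster}, I would first prove the following structural observation: if $G - V_-$ is a cluster graph with cliques $K_1, \ldots, K_t$ having windows $[L_i, R_i] := [\min_{v \in K_i} \lp{v}, \max_{v \in K_i} \rp{v}]$, then these windows are pairwise disjoint along the line. Indeed, $L_j \le R_i$ for some $i < j$ would force the interval in $K_i$ realizing $R_i$ to meet the interval in $K_j$ realizing $L_j$, contradicting the disjointness of the components $K_i$ and $K_j$ in $G - V_-$. Once this lemma is in place, the problem reduces to choosing disjoint windows on the real line and packing each with a maximum clique of intervals lying inside it. There are $O(n^2)$ candidate windows $[L, R]$ indexed by a left and a right endpoint of $G$, and for each the inner maximum clique $W(L, R)$ is simply the largest number of intervals inside $[L, R]$ that share a common point, which can be read off the model in $O(n)$ time. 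A left-to-right dynamic program over the $2n$ endpoints then combines these $W$-values into an optimum in polynomial time.

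For \problem{interval}{complete split}, let $(C, I)$ be an optimal complete split subgraph with $C \neq \emptyset$, and set $[L, R] := [\max_{u \in C} \lp{u}, \min_{u \in C} \rp{u}]$, the common intersection of the intervals in $C$. Since $C \cup \{v\}$ is a clique for each $v \in I$, Helly's property on intervals forces each such $v$'s interval to meet $[L, R]$; conversely, a vertex $u$ qualifies for $C$ exactly when its interval contains $[L, R]$. The plan is to enumerate all $O(n^2)$ pairs $(L, R)$ with $L$ a left endpoint, $R$ a right endpoint, and $L \le R$; for each pair, place every interval containing $[L, R]$ into $C$ and compute a maximum independent set of the remaining intervals that meet $[L, R]$ via the classical earliest-right-endpoint greedy on interval graphs. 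The degenerate cases $C = \emptyset$ or $I = \emptyset$ fall out of a maximum independent set or a maximum clique computation on $G$ directly.

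The main obstacle I expect is verifying that these enumerations really do capture an optimum. For the cluster problem this reduces to the disjoint-window lemma above. For the complete split problem the subtle scenario is when the optimal $I$ contains a ``tall'' vertex $v$ whose interval contains $[L, R]$; here I would argue that $I$ must equal $\{v\}$ (since any other $I$-vertex intersecting $[L, R]$ would also meet the tall $v$), and that moving $v$ from $I$ into $C$ leaves $|C| + |I|$ unchanged while matching the enumeration pair equal to the common core of $C \cup \{v\}$. With these structural lemmas in hand, the overall polynomial running times follow from the classical polynomial-time solvability of maximum clique and maximum independent set on interval graphs.
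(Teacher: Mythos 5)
Your proposal is correct and takes essentially the same route as the paper: for \problem{interval}{cluster} both arguments rest on the disjoint-window observation, a maximum clique packed inside each candidate window, and a weighted/DP combination over the $O(n^2)$ windows, while for \problem{interval}{complete split} both enumerate $O(n^2)$ endpoint pairs, put every interval containing the window into the clique, and finish with a maximum independent set computation (the paper defines its window by the two extreme intervals of $I$ and assumes $|I|\ge 2$, whereas you use the core of $C$ and handle the ``tall vertex'' case explicitly --- a cosmetic difference). One small nit: in your disjointness argument the two specific witness intervals you name need not intersect (e.g., $K_i=\{[0,6],[5,10]\}$ and $K_j=\{[1,2],[1.5,3]\}$), but the lemma is still immediate because the intervals of a clique cover its entire window, so overlapping windows force some edge between the two components.
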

\begin{proof}
  We solve both problems by finding the maximum subgraphs, for which we work on interval models.  Let us fix an interval model for the input graph $G$; we may assume without loss of generality that no distinct intervals can share an endpoint.

  For the \problem{interval}{complete split} problem, we consider a maximum complete split subgraph $G[U]$.  It is trivial if $G[U]$ is a clique;  hence we assume otherwise.  Let $C\uplus I$ be the split partition of $G[U]$, and let 
\[
\alpha = \rp{v_\ell} = \min_{v\in I}\rp{v}\text{ and }\beta = \lp{v_r} = \max_{v\in I}\lp{v}.
\]
Note that $|I| \ge 2$, as otherwise $G[U]$ is a clique; hence $v_\ell\ne v_r$ and $\alpha < \beta$.   See Figure~\ref{fig:interval-cluster}.  It is easy to see that a  vertex is in $C$ if and only if its interval fully contains $[\alpha, \beta]$; on the other hand, the maximality of $U$ requires us to take all such vertices.  The independent set $I$ would then consists of $v_\ell, v_r$, and a maximum independent set of the subgraph induced by intervals satisfying $\alpha < \lp{v} < \rp{v} < \beta$.  There are $O(n^2)$ pairs of indices to enumerate, and for each pair, both the clique and a maximum independent set can be found in $O(n)$ time.  The whole algorithm runs in $O(n^3)$ time.

  We now consider the \problem{interval}{cluster} problem.  Suppose that $G[U]$ is a maximum cluster subgraph of $G$ and that it has $k$ cliques.  For the $i$th clique $B_i$, we can find two endpoints 
\[
\alpha_i = \min_{v\in B_i} \lp{v} \text{ and } \beta_i = \max_{v\in B_i} \rp{v}.
\]
  Then all intervals for vertices in $B_i$ are completely contained in the interval $[\alpha_i, \beta_i]$.  The $k$ intervals defined as such are pairwise disjoint:  There cannot be edges between two cliques in $G[U]$.  Therefore, $B_i$ must be a maximum clique in the subgraphs induced by $\{v: \alpha_i \le \lp{v} < \rp{v} \le \beta_i\}$, which can be found easily.  See Figure~\ref{fig:interval-cluster}.  The problem can thus be reduced to find the $k$ pairs of endpoints $\alpha_i$ and $\beta_i$.

We build another weighted interval model as follows.  For each $\lp{v_\ell}$ and each $\rp{v_r}$ with $\lp{v_\ell} < \rp{v_r}$, possibly $v_\ell = v_r$, we add an interval $[\lp{v_\ell}, \rp{v_r}]$, whose weight is set to be the size of maximum cliques in the subgraphs induced by $\{v: \lp{v_\ell} \le \lp{v} < \rp{v} \le \rp{v_r}\}$.  We then find a set of pairwise disjoint intervals with the maximum weight sum (or equivalently, a maximum-weight independent set of the weighted interval graph represented by the new interval model).  All the steps can be done in polynomial time. 
\end{proof}

It is easy to verify the following greedy algorithm solves the \problem{tree}{cluster} problem.  We root the input graph at an arbitrary vertex, and work on any leaf at the lowest level:  If it has siblings (i.e., its parent has degree larger than $2$), then delete its parent and put it into the solution; otherwise the parent of its parent.  As we see below, a similar idea would enable us to solve the \problem{block}{cluster} problem.  Recall that a \emph{block} (also known as biconnected component) of a graph $G$ is a maximal biconnected subgraph of $G$.  The \emph{block-cut tree} of a block graph has a vertex for each block and for each cut vertex, and an edge for each pair of a block and a cut vertex that belongs to that block.  Note that every block of a block graph is a clique.
\begin{theorem}\label{thm:interval-cluster}
  The \problem{block}{cluster} problem is in P.
\end{theorem}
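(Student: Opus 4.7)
The plan is to solve the \problem{block}{cluster} problem by dynamic programming on the block-cut tree $T$ of the input block graph $G$. The structural observation that enables this is that every triangle in a block graph is contained in a single block, hence every induced $P_3$ of $G$ takes the form $u$-$v$-$w$ where $v$ is a cut vertex and $u$, $w$ lie in two distinct blocks containing $v$. Writing $V_+$ for the retained set of vertices (so $V_- = V(G) \setminus V_+$), this means $G[V_+]$ is a cluster graph if and only if for every cut vertex $v \in V_+$, at most one block $B \ni v$ satisfies $V_+ \cap B \supsetneq \{v\}$; I call the unique such block, if any, the \emph{active} block of $v$, and any other block $B$ containing $v$ is then \emph{inactive} and thereby forced to satisfy $B \setminus \{v\} \subseteq V_-$.

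I would build $T$ in linear time and root it at an arbitrary block. For each block $B$ with parent cut vertex $v$ in $T$, I maintain three DP values --- the maximum of $|V_+ \cap V(T_B)|$ under the hypotheses (a) $v \in V_+$ and $B$ is $v$'s active block; (b) $v \in V_+$ and $B$ is inactive for $v$; (c) $v \in V_-$. Symmetrically, for each cut vertex $u$ with parent block in $T$, I maintain three values under the hypotheses (a$'$) $u \in V_+$ with no child block of $u$ active for $u$; (b$'$) $u \in V_+$ with exactly one child block of $u$ active for $u$; (c$'$) $u \in V_-$. Because the only global constraint---each retained cut vertex has at most one active block---is localized at each cut vertex, these six states capture exactly the information that needs to be passed between a parent and its child in $T$. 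Processing $T$ in post-order, the transitions at a cut vertex $u$ are straightforward: (a$'$) sums the (b)-values of $u$'s child blocks and adds $1$; (b$'$) picks a distinguished child block to supply its (a)-value, with the rest supplying (b)-values, and adds $1$; (c$'$) sums the (c)-values. Transitions at a block $B$ are analogous: (a) retains every non-cut vertex of $B\setminus\{v\}$ and, for each child cut vertex $u_i$, takes the better of (a$'$) (retain $u_i$ with $B$ as its active block) and (c$'$) (delete $u_i$); (b) deletes all of $B \setminus \{v\}$ and uses only (c$'$)-values from the children.

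The main subtlety lies in state (c) at a block $B$, where its parent $v$ is deleted. If $B$ has at least one non-cut vertex other than $v$, the calculation mimics state (a). If not---so that $B \setminus \{v\}$ consists entirely of child cut vertices---the subset $V_+ \cap B$ may be empty, a singleton $\{u_{i^\ast}\}$ (in which case $B$ is inactive for $u_{i^\ast}$, but $u_{i^\ast}$ is free to supply a (b$'$)-value by choosing an active block among its own children in $T$), or of size at least two (in which case $B$ becomes $u_i$'s active block for every retained $u_i$, and we may have to force at least two $u_i$'s into the solution even when some have higher (c$'$)-value than (a$'$)-value). A short enumeration over these three sub-cases resolves the issue. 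Each of the $O(n)$ nodes of $T$ then supports a constant-state transition computable in time linear in its degree, and the block-cut tree itself is built in linear time, yielding an $O(n+m)$ algorithm in total.
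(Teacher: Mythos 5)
Your proof is correct, but it takes a genuinely different route from the paper. The paper's argument is a greedy peeling procedure on the block-cut tree: it repeatedly takes a lowest leaf block, and makes an irrevocable local deletion (the parent cut vertex, or the grandparent block minus the cut vertex) depending on whether the parent cut vertex has siblings and whether the grandparent block contains a non-cut vertex; correctness is argued (in fact only asserted) by local exchange reasoning. You instead first isolate a clean characterization---since every block of a block graph is a clique, every induced $P_3$ is centered at a cut vertex with its ends in two distinct blocks, so $G[V_+]$ is a cluster graph iff each retained cut vertex has at most one \emph{active} block---and then run a three-state dynamic program over the block-cut tree. What the DP buys: all decisions are deferred rather than committed greedily, so correctness reduces to checking that the six states capture exactly the interface information at each tree edge; it extends immediately to the vertex-weighted version (which the paper claims for this theorem, and which is less obvious for the greedy rule); and it runs in linear time. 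What the paper's greedy buys is the absence of any state bookkeeping. One spot where you assert rather than argue: in state (c), when $B$ contains a non-cut vertex besides $v$, you claim it suffices to ``mimic state (a)'' (retain all non-cut vertices and restrict children to (a$'$)/(c$'$)), thereby discarding the sub-case in which $V_+\cap B$ is a singleton cut vertex drawing a (b$'$)-value. This is true, but only because $b'(u)\le c'(u)+1$ for every cut vertex $u$ (delete $u$ from a configuration attaining $b'(u)$), so the singleton option can never beat retaining a non-cut vertex plus the $\max(a',c')$ choices; either supply this one-line inequality or simply run your three-sub-case enumeration unconditionally, which costs nothing. You should also say a word about the root block, which has no parent cut vertex and is handled exactly like state (c) with no $v$ present.
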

\begin{proof}
  We construct the block-cut tree $T$ of the input graph $G$.  A cut vertex $v$ of $G$ is denoted by the same label in $T$, while for a block vertex $u$ of $T$, we use $B(u)$ to denote the vertices in the block of $G$.  We arbitrarily root $T$ at some block vertex.  Note that all leaves of $T$ are block vertices, and their neighbors are not; this invariant will be maintained during our algorithm.  
Until the tree becomes empty, the algorithm always picks a leaf vertex $u$ at the lowest level.   Let $v$ be its parent.  If $v$ has other children, we remove $v$ and its children from $T$ and put $v$ in the solution $V_-$.  In the rest $u$ is the only child of $v$; let $u'$ be the parent of $v$, and let $v'$ be the parent of $u'$.  If at least one vertex in the clique $B(u')$ is not a cut vertex, then we  remove $v, u$ from $T$ and put $v$ in $V_-$.  Otherwise, we remove the subtree rooted at $v'$ from $T$; we put $B(u')\setminus \{v\}$ into the solution, and for each other child $u_i$ of $v'$ that is not a leaf, we solve the subgraph induced by $B(u_i)$ and its children.  The correctness is quite straightforward, so we omit here.
\end{proof}

The last three problems are from chordal graphs.
\begin{theorem}\label{thm:chordal-co-chain}
  The \problem{chordal}{co-chain} problem is in P.
\end{theorem}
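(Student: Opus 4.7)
The plan is to pin down the structure of chordal co-chain graphs and then reduce to an enumeration problem. A chain graph is $2K_2$-free bipartite, so its minimal forbidden induced subgraphs are $\{K_3, 2K_2, C_5\}$ (any longer odd cycle contains an induced $2K_2$ and is therefore not minimal). Taking complements, the minimal forbidden induced subgraphs of \hsc{co-chain} are $\{3K_1, C_4, C_5\}$. Since chordal graphs contain neither $C_4$ nor $C_5$, a chordal graph is co-chain if and only if it is $3K_1$-free, i.e., $\alpha(G) \le 2$; and by perfectness of chordal graphs (so $\chi(\overline{G}) = \omega(\overline{G}) = \alpha(G)$), this is in turn equivalent to $V(G)$ being a union of two cliques.

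The \problem{chordal}{co-chain} problem on input $G$ therefore asks for a maximum subset $U \subseteq V(G)$ that can be partitioned into two cliques, equivalently, a pair of disjoint cliques $C_1, C_2 \subseteq V(G)$ with $|C_1|+|C_2|$ maximum; the optimal deletion set is then $V_- = V(G) \setminus (C_1 \cup C_2)$.

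The next step is to observe that one may restrict attention to maximal cliques. If $C_1, C_2$ are disjoint cliques and $M_i$ is any maximal clique containing $C_i$, then $M_1$ and $M_2 \setminus M_1$ are also disjoint cliques, and their union $M_1 \cup M_2$ contains $C_1 \cup C_2$. Hence the optimum value equals $\max |M_1 \cup M_2|$ taken over ordered pairs of maximal cliques of $G$. A chordal graph has at most $n$ maximal cliques, enumerable in $O(n+m)$ time via a perfect elimination ordering; so I would enumerate the $O(n^2)$ pairs, compute $|M_1 \cup M_2|$ in $O(n)$ time for each, and output the complement of the best pair, for an overall $O(n^3)$ algorithm.

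The only nontrivial step is the first equivalence within the class \hsc{chordal}, namely ``co-chain $\Leftrightarrow$ $3K_1$-free $\Leftrightarrow$ union of two cliques''; the first biconditional uses that $C_4$ and $C_5$ are absent from any chordal graph, while the second uses perfectness to upgrade $\alpha(G)\le 2$ into an actual two-clique cover. Once this structural characterization is secured, the algorithm is essentially a brute-force enumeration made polynomial by the linear bound on the number of maximal cliques in a chordal graph.
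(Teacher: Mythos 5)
Your proof is correct and follows essentially the same route as the paper: characterize chordal co-chain graphs as those whose vertex set is a union of two cliques, and then enumerate the $O(n^2)$ pairs of maximal cliques, using the fact that a chordal graph has at most $n$ maximal cliques. You merely spell out the characterization (via the obstruction set $\{3K_1, C_4, C_5\}$ and perfectness) and the reduction to maximal cliques in more detail than the paper does, which is fine.
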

\begin{proof}
  The vertices of a {co-chain} graph can be partitioned into two cliques.  On the other hand, any two maximal cliques of a chordal graph together induce a co-chain graph.  Therefore, the problem is to find two maximal cliques with the maximum cardinality together.  Since a chordal graph has at most $n$ maximal cliques, It can be easily calculated in $O(n^2)$ time.
\end{proof}

\begin{theorem}
  For any $p > 1$,  the \problem{chordal}{$K_p$-free} problem is in P.
\end{theorem}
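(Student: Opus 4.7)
The plan is to reduce \problem{chordal}{$K_p$-free} to the maximum $(p-1)$-colorable induced subgraph problem on chordal graphs. The key observation is that chordal graphs are perfect, so every induced subgraph $H$ of a chordal graph satisfies $\omega(H) = \chi(H)$; in particular $H$ is $K_p$-free if and only if $\chi(H) \le p-1$. Since \hsc{chordal} is hereditary, $G - V_-$ remains chordal for any $V_- \subseteq V(G)$, and therefore finding a minimum $V_-$ with $G - V_-$ being $K_p$-free is equivalent to finding a maximum induced subgraph of $G$ that is $(p-1)$-colorable.

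I would then invoke the classical theorem of Yannakakis and Gavril~\cite{yannakakis-87-k-colorable-subgraph}, which states that for every fixed $k$ a maximum $k$-colorable induced subgraph of a chordal graph can be computed in polynomial time. Setting $k = p - 1$ immediately yields the theorem. Note that the $p = 2$ case recovers the polynomial-time maximum independent set algorithm on chordal graphs already alluded to in the introduction, and the $p = 3$ case asks for a maximum bipartite induced subgraph.

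The main obstacle in this approach is essentially absent: once the $\omega = \chi$ equivalence on chordal graphs is recognised, the problem reduces to a well-studied classical one. The only subtlety worth pointing out is that the running time of the Yannakakis--Gavril algorithm is polynomial for every fixed $p$ but grows with $p$; this matches the statement of the theorem, where $p$ is treated as a fixed parameter rather than part of the input. If one wished to avoid citing the external algorithm, a self-contained proof can be given by running dynamic programming on a clique tree of $G$: at each clique $K$ one chooses which subset of at most $p - 1$ vertices of $K$ is retained and keeps a table indexed by such subsets; since $|K| \le n$ and the subsets have bounded size, the table has $O(n^{p-1})$ entries per node, giving an overall polynomial bound.
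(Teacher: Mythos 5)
Your proposal is correct and takes essentially the same route as the paper: both rest on the observation that a chordal graph is $K_p$-free if and only if it is $(p-1)$-colorable (the paper states this via treewidth at most $p-2$ but makes the same coloring remark), and both then invoke the Yannakakis--Gavril polynomial-time algorithm for the maximum $k$-colorable induced subgraph problem on chordal graphs with $k = p-1$.
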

 \begin{proof}
   It is known that a chordal graph is $K_p$-free if and only if it
   has treewidth at most $p - 2$.  Thus the problem is to find an
   induced subgraph of treewidth at most $p - 2$ with the maximum
   number of vertices.  It is known that such a problem can be solved
   in polynomial time for chordal graphs
   \cite{yannakakis-87-k-colorable-subgraph}: Note that a chordal graph is $K_p$-free if and only if it can be colored by $p-1$ colors.
 \end{proof}

\begin{theorem}[\cite{ekim-05-split-coloring}]
  The \problem{chordal}{split} problem is in P. 
\end{theorem}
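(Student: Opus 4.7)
The plan is to reduce the problem to polynomially many calls of maximum independent set on a chordal graph, which is solvable in polynomial time~\cite{gavril-72-coloring}. Concretely, I would show that an optimal split induced subgraph of a chordal graph $G$ can always be written as $G[M \cup I]$, where $M$ is some maximal clique of $G$ and $I$ is a maximum independent set of $G - M$. Since a chordal graph has at most $n$ maximal cliques, all listable in $O(n+m)$ time from a perfect elimination ordering, this gives the algorithm: enumerate each maximal clique $M$, compute $\alpha(G - M)$, and return the partition attaining $\max_M (|M| + \alpha(G - M))$.

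The crux is the optimality argument. Let $V_-$ be a minimum solution and $C \uplus I$ a split partition of $G - V_-$, which by Proposition~\ref{lem:split-partitions} may be chosen so that $C$ is a maximum clique of $G - V_-$. Extend $C$ to a maximal clique $M$ of $G$. I would first observe that we may assume $M \setminus C \subseteq I$: if some $v \in M \setminus C$ lay in $V_-$, removing $v$ from $V_-$ and adding it to $C$ would still give a clique (since $v \in M$ and $C \subseteq M$) and leave $I$ intact, producing a strictly larger split induced subgraph and contradicting optimality. Because $I$ is independent and $M$ is a clique, this forces $|M \setminus C| \le 1$.

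It then remains to compare the two resulting cases. If $C = M$, then $I \subseteq V(G) \setminus M$ and $|C| + |I| \le |M| + \alpha(G - M)$, with equality attainable. If instead $M \setminus C = \{v\}$ with $v \in I$, then $I \setminus \{v\}$ is independent and disjoint from $M \cup N(v)$, so
\[
|C| + |I| \;=\; |M| + |I \setminus \{v\}| \;\le\; |M| + \alpha(G - M - N(v)) \;\le\; |M| + \alpha(G - M).
\]
Hence the Case 1 bound dominates, and the optimum equals $\max_M (|M| + \alpha(G - M))$ over maximal cliques $M$ of $G$. The main (and only genuine) obstacle is this structural reduction; once it is in hand, the enumeration of maximal cliques plus $n$ invocations of the chordal maximum independent set algorithm yield an $O(n(n+m))$-time procedure, and correctness is immediate.
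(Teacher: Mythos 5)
Your argument is correct, but it is genuinely different from what the paper does: the paper offers no proof at all for this theorem, deferring entirely to the split-coloring framework of Ekim and de~Werra \cite{ekim-05-split-coloring}, whereas you give a short self-contained reduction to at most $n$ maximum-independent-set computations on chordal graphs \cite{gavril-72-coloring}. Your exchange argument is sound: extending the clique side $C$ of an optimal split subgraph to a maximal clique $M$ of $G$, minimality forces $M\setminus C\subseteq I$, independence of $I$ then gives $|M\setminus C|\le 1$, and in both cases $|C|+|I|\le |M|+\alpha(G-M)$, while every candidate $G[M\cup I']$ is trivially split, so the optimum equals $\max_M\bigl(|M|+\alpha(G-M)\bigr)$. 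Two small remarks. First, your appeal to Proposition~\ref{lem:split-partitions} to make $C$ a maximum clique of $G-V_-$ is unnecessary (and that proposition does not literally state this); your argument never uses it, so you can drop it. Second, it is worth noting that your structural identity holds for \emph{arbitrary} graphs --- chordality enters only algorithmically, to bound the number of maximal cliques by $n$, enumerate them from a perfect elimination ordering, and compute $\alpha$ on the (still chordal) graphs $G-M$ in polynomial time. What your route buys is an elementary, implementable $O(n(n+m))$ algorithm and an explicit description of an optimal solution (a maximal clique plus a maximum independent set of the rest); what the citation buys the paper is brevity and access to the more general split-coloring/cocoloring results of \cite{ekim-05-split-coloring}.
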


We remark that Theorem~\ref{thm:split-cluster}--\ref{thm:interval-cluster} can be adapted for the weighted versions of the problems.

\section{Hardness}\label{sec:hardness}

We now turn to hardness results.  Here the problems should be understood to be their decision versions: The input includes, apart from a graph $G$ from ${\cal C}_1$, a positive integer $k$, and the problem is to decide whether $G$ can be made a graph in ${\cal C}_2$ by deleting at most $k$ vertices.  All of them are in NP because all the concerned graph classes can be recognized in polynomial time.
Our first hardness result, on \problem{split}{threshold}, follows easily from the results of Yannakakis~\cite{yannakakis-81-bipartite-node-deletion} on bipartite graphs.  Recall that a bipartite graph is not a chain graph if and only if it contains some $2 K_2$, and a split graph is not a threshold graph if and only if it contains some $P_4$.

\begin{lemma}
  The \problem{split}{threshold} problem is NP-complete.
\end{lemma}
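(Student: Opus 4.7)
The plan is to reduce from the \problem{bipartite}{chain} problem, which Yannakakis proved NP-complete. Membership in NP is immediate since threshold graphs can be recognized in polynomial time, so I would focus on hardness. Given a bipartite instance $(G, k)$ with fixed bipartition $A \uplus B$, I would build a split graph $G'$ on the same vertex set by adding every edge inside $A$, so that $A$ becomes a clique $C$ and $B$ remains an independent set $I$ in $G'$. Clearly $G'$ is a split graph and $G = G' - E(C)$. The claim to prove is that $G$ admits a chain-making deletion of size at most $k$ if and only if $G'$ admits a threshold-making deletion of size at most $k$.

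The key structural observation, which I would establish first, is that being a split graph already forbids $C_4$ and $2K_2$, so a split graph is threshold exactly when it is $P_4$-free; analogously, a bipartite graph is chain exactly when it is $2K_2$-free. Thus it suffices to exhibit a bijective correspondence between induced $P_4$'s of $G'$ and induced $2K_2$'s of $G$ on the same four vertices. I would run through the case analysis on how a $P_4$ can intersect $C$ and $I$: since $I$ is independent, at most two vertices of a $P_4$ lie in $I$, and since $C$ is a clique, any three of its vertices would already induce a triangle inside the $P_4$, which is impossible. A quick enumeration of the remaining placements shows that the only feasible pattern is that the two endpoints of the $P_4$ lie in $I$ and the two internal vertices lie in $C$; removing the clique edge between the internal vertices then leaves precisely a $2K_2$ in $G$. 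Conversely, any $2K_2$ of the bipartite graph $G$ uses two vertices in each part, and re-inserting the $C$-edge between its two $A$-vertices turns it back into a $P_4$ in $G'$.

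With this vertex-by-vertex correspondence of obstructions in hand, for any $V_- \subseteq V(G) = V(G')$ the induced subgraph $G' - V_-$ is $P_4$-free if and only if $G - V_-$ is $2K_2$-free, so the two deletion problems have exactly the same feasible solutions and the same optimum. Hence $(G,k)$ is a yes-instance of \problem{bipartite}{chain} iff $(G',k)$ is a yes-instance of \problem{split}{threshold}, completing the reduction.

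The only delicate point I anticipate is the case analysis matching $P_4$'s in the split graph with $2K_2$'s in its bipartite complement on $C$; one has to verify not merely that the prescribed edges are present and absent, but also that no other edges of the split graph spoil the induced copy of $P_4$ (in particular, that the two $C$-endpoints of the $2K_2$ are adjacent in $G'$ but induce no further edges to the $I$-endpoints beyond the prescribed ones). Once this routine but slightly finicky check is carried out, the reduction and correctness follow immediately, and polynomial-time computability of $G'$ from $G$ is obvious.
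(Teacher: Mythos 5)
Your proposal is correct and follows essentially the same route as the paper: reduce from \problem{bipartite}{chain} by completing one side of the bipartition into a clique, then use the facts that a split graph is threshold iff it is $P_4$-free and a bipartite graph is chain iff it is $2K_2$-free, matching induced $2K_2$'s of $G$ with induced $P_4$'s of $G'$ on the same four vertices (two per side, with the clique pair in the middle). Your case analysis is just a slightly more detailed version of the paper's argument, so nothing further is needed.
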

\begin{proof}
  Let $G$ be a bipartite graph with partition $C$ and $I$.  We add all possible edges among $C$ to make it a clique.  Let $G'$ be the
  resulting graph, which is clearly a split graph, witnessed by the split partition $C\uplus I$.  We argue for every
  vertex set $U$ that $G[U]$ is a chain graph, i.e., being $2
  K_2$-free, if and only if $G'[U]$ is a threshold graph, i.e., being
  $P_4$-free.
  Let $X$ be any set of four vertices.  If $G[X]$ is $2 K_2$, then
  $|X\cap C| = |X\cap I| = 2$, but then $G'[X]$ would be a $P_4$.  The
  other direction can be argued similarly.
  Since the \problem{bipartite}{chain} problem is
  NP-hard~\cite{yannakakis-81-bipartite-node-deletion}, the lemma follows.
\end{proof}

Recall that every threshold graph is an interval graph, and this can be generalized as follows.
Let $G_1$ and $G_2$ be two threshold graphs with split partitions $C\uplus I$ and $C'\uplus I'$ respectively.  We let $G_1 \bowtie_{(C, C')} G_2$, or simply $G_1 \bowtie G'$ as in the rest of the paper the partitions are always clear from context, denote the graph obtained from them by adding all possible edges between $C$ and $C'$---i.e., its vertex set and edge set are $V(G_1)\cup V(G_2)$ and $E(G_1)\cup E(G_2)\cup (C\times C' )$ respectively.  This is clearly a split graph with split partition $C \cup C'$ and $I \cup I'$.  One can verify that $G_1 \bowtie G_2$ is also an interval graph by their obstructions as follows.  A split graph that is not an interval graph has to contain a tent, a net, or a rising sun (see Figure~\ref{fig:small-graphs}).   Each of them has three independent vertices, which have to be from $I\cup I'$, but a quick inspection of these three graphs will convince us that this cannot be possible.

\begin{proposition}\label{lem:merge-threshold}
  For any two threshold graphs $G_1$ and $G_2$, the graph $G_1 \bowtie
  G_2$ is an interval graph.
\end{proposition}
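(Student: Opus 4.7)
The plan is to invoke the forbidden induced subgraph characterization of \hsc{split}~$\cap$~\hsc{interval} noted in the lead-in to the proposition: a split graph is an interval graph if and only if it contains no tent, no net, and no rising sun. Since $G_1\bowtie G_2$ is manifestly split with partition $(C\cup C')\uplus(I\cup I')$, it suffices to rule out each of these three obstructions as an induced subgraph.

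The main lever I want to use is a nesting observation on the independent side. For any $u,v\in I$, the threshold property of $G_1$ gives $N_{G_1}(u)\subseteq N_{G_1}(v)$ or vice versa; the $\bowtie$ operation adds no edge incident to $I$, so the same nesting persists in $G_1\bowtie G_2$, and by intersecting with the vertex set of any induced subgraph the inclusion descends to that subgraph. The analogous statement holds for pairs inside $I'$. Hence, for any two same-side vertices $u,v\in I\cup I'$ and any induced subgraph $H$ of $G_1\bowtie G_2$, the neighborhoods $N_H(u)$ and $N_H(v)$ are comparable by inclusion.

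I would then suppose for contradiction that $G_1\bowtie G_2$ induces a copy $H$ of one of the three obstructions. In each of Figures~\ref{fig:net},~\ref{fig:tent},~and~\ref{fig:rising-sun}, the three outer vertices labelled $s$, $a$, and $b$ form an independent triple. Because $C\cup C'$ is a clique, these three vertices must lie in $I\cup I'$, and the pigeonhole principle places two of them, say $x$ and $y$, on the same side. By the previous paragraph, $N_H(x)$ and $N_H(y)$ must then be comparable by inclusion.

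The last step, which I expect to be tedious rather than difficult, is to verify that this comparability fails uniformly in all three obstructions. Reading the induced neighborhoods of $\{s,a,b\}$ directly off each figure yields $(\{c_1,c_2\},\{a_1,c_1\},\{a_1,c_2\})$ for the tent, $(\{c\},\{a_1\},\{b_1\})$ for the net, and $(\{c_1,c_2\},\{a_1,c_1\},\{b_1,c_2\})$ for the rising sun. In every case each pair of these sets contains an element missing from the other, so no two of the three neighborhoods are nested; this delivers the required contradiction and completes the proof. The only minor subtlety to keep in mind is that the inclusion $N_{G_1}(u)\subseteq N_{G_1}(v)$ could in principle go either way, but the pairwise incomparability of the three $H$-neighborhoods makes the choice of direction irrelevant.
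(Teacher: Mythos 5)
Your proof is correct and takes essentially the same route as the paper: both rule out the split non-interval obstructions (net, tent, rising sun) by forcing their independent triple into $I\cup I'$ and showing such a triple cannot occur, with your nested-neighborhood plus pigeonhole argument being an explicit, uniform version of the ``quick inspection'' the paper leaves to the reader. One shared small elision: strictly, a clique side only excludes two of the three independent vertices from $C\cup C'$, but the remaining case is immediate (a triple vertex in $C\cup C'$ would force two adjacent obstruction vertices into $I\cup I'$), so this matches the paper's level of detail.
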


A better way to look at Proposition~\ref{lem:merge-threshold} is probably through interval models.\footnote{The following two paragraphs and two figures are for illustration purpose.  They relate the intuition behind our reduction, but are not directly used in the arguments to follow.  The reader may safely skip them if you prefer.}
Let $G$ be a threshold graph with split partition $C\uplus I$, and let
vertices in $I$ be ordered in a way that $N(v_1) \subseteq N(v_2)
\subseteq \cdots \subseteq N(v_{|I|}) $.  We can build an interval
model for $G$ by setting intervals
\begin{align}
   [i, i + 0.5] &&\text{ for every } v_i\in I,
  \notag
  \\
  \label{eq:threshold}
   \big [\min \{i : v_i\in N(v)\}, |I| + 2 \big] &&\text{ for every }
  v\in N(I), \text{ and}
  \\
  \notag
  \big[|I| + 1, |I| + 2 \big] &&\text{ otherwise }(i.e., v\in C\setminus
  N(I)).
\end{align}
See Figure~\ref{fig:threshold-model} for
illustration.

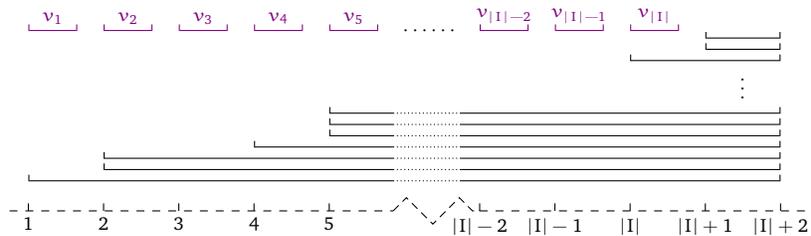
\begin{figure}[h]
  \centering\scriptsize
  \begin{tikzpicture}[scale=.5]
    \draw[dashed] (1.5, 0) -- (11.7,0) (14,0) -- (23,0);
    \draw[densely dashed, decorate,decoration={zigzag, amplitude=5pt, segment length=20pt}] (11.7, 0) -- (14, 0);

    \foreach \x in {1, ..., 5}{
      \draw[violet,{|[right]}-{|[left]}] (\x* 2, 4.8) to ["$v_{\scriptsize\x}$"] (\x*2 + 1.3, 4.8);
      \draw[thin,dashed] (\x*2 , 0) edge (\x*2, 0.2) node[below] {$\x$}; 
    }
    \node at (12.7, 4.8) {$\cdots\cdots$};
    \foreach \x/\n in {7/|I| - 2, 8/|I| - 1, 9/|I|}{
      \draw[violet,{|[right]}-{|[left]}] (\x* 2, 4.8) to ["$v_{\n}$"] (\x*2 + 1.3, 4.8);
      \draw[thin,dashed] (\x*2 , 0) edge (\x*2, 0.2) node[below] {$\n$}; 
    }
    \draw[thin,dashed] (20 , 0) edge (20, 0.2) node[below] {$|I| + 1$}; 
    \draw[thin,dashed] (22 , 0) edge (22, 0.2) node[below] {$|I| + 2$}; 

    \foreach \n/\x in {1/2, 2/4, 3/4, 4/8, 5/10, 6/10, 7/10}{
      \draw[{|[right]}-]  (\x, .5 +\n * 0.3) -- (11.7, .5 + \n * 0.3);
      \draw[densely dotted] (11.7, .5 +\n * 0.3) -- (13.5, .5 +\n * 0.3);
      \draw[-{|[left]}] (13.5, .5 +\n * 0.3) -- (22, .5 +\n * 0.3);
    }
    \node[rotate=90] at (21, 3.3) {\bf$\cdots$};
    \draw[{|[right]}-{|[left]}]  (18,4.) -- (22,4.);
    \draw[{|[right]}-{|[left]}]  (20,4.3) -- (22,4.3);
    \draw[{|[right]}-{|[left]}]  (20,4.6) -- (22,4.6);
  \end{tikzpicture}
  \caption{The interval model for a threshold graph given by \eqref{eq:threshold}.}
  \label{fig:threshold-model}
\end{figure}
An interval model for $G_1\bowtie G_2$ can be built from the interval models for $G_1$ and $G_2$ by (i) keeping the intervals for $G_1$, and (ii) setting the interval to be $\big[ |I|+ |I'| + 3 - \rp{v}, |I|+ |I'| + 3 - \lp{v} \big]$ for each $v\in V(G_2)$.  See Figure~\ref{fig:joint-threshold}.  

\begin{figure}[h]
  \centering
  \begin{tikzpicture}[scale=.4]
  \begin{scope}
    \draw[dashed] (1.5, 0) -- (11.7,0) (14,0) -- (23,0);
    \draw[densely dashed, decorate,decoration={zigzag, amplitude=5pt, segment length=20pt}] (11.7, 0) -- (14, 0);

    \foreach \x in {1, ..., 5}{
      \draw[violet,{|[right]}-{|[left]}] (\x* 2, 4.8) to (\x*2 + 1.3, 4.8);
      \draw[thin,dashed] (\x*2 , 0) edge (\x*2, 0.2) node[below] {\scriptsize$\x$}; 
    }
    \node at (12.7, 4.8) {$\cdots$};
    \foreach \x/\n in {7/|I| - 2, 8/, 9/|I|}{
      \draw[violet,{|[right]}-{|[left]}] (\x* 2, 4.8) to (\x*2 + 1.3, 4.8);
      \draw[thin,dashed] (\x*2 , 0) edge (\x*2, 0.2) node[below] {\scriptsize$\n$}; 
    }
    \draw[thin,dashed] (20 , 0) edge (20, 0.2); 
    \draw[thin,dashed] (22 , 0) edge (22, 0.2) node[below] {\scriptsize$|I| + 2$}; 

    \foreach \n/\x in {1/2, 2/4, 3/4, 4/8, 5/10, 6/10, 7/10}{
      \draw[{|[right]}-]  (\x, .5 +\n * 0.3) -- (11.7, .5 + \n * 0.3);
      \draw[densely dotted] (11.7, .5 +\n * 0.3) -- (13.5, .5 +\n * 0.3);
      \draw[-{|[left]}] (13.5, .5 +\n * 0.3) -- (22, .5 +\n * 0.3);
    }
    \node[rotate=90] at (21, 3.3) {\bf$\cdots$};
    \draw[{|[right]}-{|[left]}]  (18,4.) -- (22,4.);
    \draw[{|[right]}-{|[left]}]  (20,4.3) -- (22,4.3);
    \draw[{|[right]}-{|[left]}]  (20,4.6) -- (22,4.6);
  \end{scope}
    
  \begin{scope}[rotate=180, xshift=-42cm]
    \draw[dashed] (1.5, 0) -- (11.7,0) (14,0) -- (23,0);
    \draw[densely dashed, decorate,decoration={zigzag, amplitude=5pt, segment length=20pt}] (11.7, 0) -- (14, 0);

    \foreach \x in {1, ..., 5}{
      \draw[violet,{|[left]}-{|[right]}] (\x* 2, 4.8) to (\x*2 + 1.3, 4.8);
    }
      \draw[thin,dashed] (2 , 0) edge (2, -0.2) node[below] {\scriptsize$|I| + |I'| + 2$}; 
    \node at (12.7, 4.8) {$\cdots$};
    \foreach \x/\n in {7/|I| - 2, 8/|I| - 1, 9/|I|}{
      \draw[violet,{|[left]}-{|[right]}] (\x* 2, 4.8) to (\x*2 + 1.3, 4.8);
    }

    \foreach \n/\x in {1/2, 2/4, 3/4, 4/8, 5/10, 6/10, 7/10}{
      \draw[{|[left]}-]  (\x, .5 +\n * 0.3) -- (11.7, .5 + \n * 0.3);
      \draw[densely dotted] (11.7, .5 +\n * 0.3) -- (13.5, .5 +\n * 0.3);
      \draw[-{|[right]}] (13.5, .5 +\n * 0.3) -- (22, .5 +\n * 0.3);
    }
    \node[rotate=90] at (21, 3.3) {\bf$\cdots$};
    \draw[{|[left]}-{|[right]}]  (18,4.) -- (22,4.);
    \draw[{|[left]}-{|[right]}]  (20,4.3) -- (22,4.3);
    \draw[{|[left]}-{|[right]}]  (20,4.6) -- (22,4.6);
  \end{scope}
    
  \end{tikzpicture}
  \caption{The interval model for $G_1\bowtie G_2$.}
  \label{fig:joint-threshold}
\end{figure}
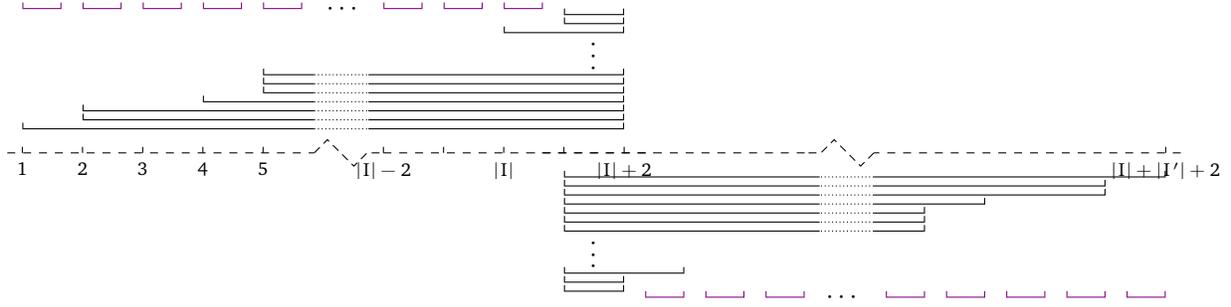

We are now ready to prove the first major theorem of this section.  
\begin{theorem}\label{thm:chordal-to-interval}
  The \problem{split}{interval} problem is NP-complete.
\end{theorem}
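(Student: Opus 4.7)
The plan is to reduce from \problem{split}{threshold}, whose NP-completeness was established in the preceding lemma. Given an instance $(G,k)$ of that problem with split partition $C\uplus I$ and $n = |V(G)|$, I construct $G^{*} = G\bowtie H$, where $H$ is a complete split graph (hence threshold) with clique $C_{H}$ and independent set $I_{H}$ each of size $n + 1$. The graph $G^{*}$ is split with partition $(C\cup C_{H})\uplus (I \cup I_{H})$ and is computable in polynomial time; since \problem{split}{interval} is in NP, it suffices to show that $(G, k)$ is a yes-instance of \problem{split}{threshold} if and only if $(G^{*}, k)$ is a yes-instance of \problem{split}{interval}.

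The forward direction is immediate from Proposition~\ref{lem:merge-threshold}: if some $V_{-}\subseteq V(G)$ with $|V_{-}|\le k$ makes $G - V_{-}$ threshold, then $G^{*} - V_{-} = (G - V_{-})\bowtie H$ is the join of two threshold graphs, hence interval.

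The heart of the proof is the converse, and the main obstacle is to argue that an optimal interval-deletion solution on $G^{*}$ cannot cheat by tampering with the gadget $H$. Suppose $V_{-}\subseteq V(G^{*})$ with $|V_{-}|\le k$ makes $G^{*} - V_{-}$ interval, and write $V_{-}^{G} = V_{-}\cap V(G)$ and $V_{-}^{H} = V_{-}\cap V(H)$. Since $|V_{-}^{H}|\le k\le n$, both $C_{H}\setminus V_{-}^{H}$ and $I_{H}\setminus V_{-}^{H}$ are nonempty, so we can fix $u \in C_{H}\setminus V_{-}^{H}$ and $v\in I_{H}\setminus V_{-}^{H}$; as $H$ is complete split, $uv\in E(H)$. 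Suppose for contradiction that $G - V_{-}^{G}$ contains an induced $P_4$; an easy case analysis (both endpoints lie in $I$, both internal vertices in $C$) shows this $P_4$ has the form $a\text{-}b\text{-}c\text{-}d$ with $a, d\in I$ and $b, c\in C$. I claim $\{a, d, v\}$ is an asteroidal triple of the interval graph $G^{*} - V_{-}$, the desired contradiction. It is independent because $I\cup I_{H}$ remains independent in $G^{*}$; and the three required paths are (i) $a\text{-}b\text{-}c\text{-}d$, avoiding $N[v]\subseteq V(H)$; (ii) $a\text{-}b\text{-}u\text{-}v$, avoiding $N[d]\subseteq\{d\}\cup C$, using $b\notin N(d)$ from the $P_4$ and $u\in C_{H}$; and (iii) $d\text{-}c\text{-}u\text{-}v$, avoiding $N[a]\subseteq\{a\}\cup C$, using $c\notin N(a)$ from the $P_4$ and $u\in C_{H}$. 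Therefore $G - V_{-}^{G}$ is a $P_4$-free split graph, which is threshold, and $|V_{-}^{G}|\le k$, completing the reduction.
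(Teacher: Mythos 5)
Your proof is correct and is essentially the paper's own argument: the same reduction from \problem{split}{threshold} obtained by $\bowtie$-joining the instance with a complete split gadget, the same forward direction via Proposition~\ref{lem:merge-threshold}, and a backward direction that extracts a surviving $P_4$ on $a,b,c,d$ together with surviving gadget vertices $u,v$ --- your asteroidal triple $\{a,d,v\}$ sits inside exactly the net on $\{a,b,c,d,u,v\}$ that the paper exhibits, so the final obstruction argument differs only in phrasing. The one point to tidy is the unjustified ``$k\le n$'': as the paper does with its explicit assumption $k<|C|$, note that for $k>n$ both instances are trivially yes (delete all of $V(G)$), so one may assume $k\le n$ without loss of generality.
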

\begin{proof}
It is clear that the problem is in NP.  Let $G$ be a split graph with split partition $C\uplus I$.  We take a complete split graph $G'$ with split partition $C'\uplus I'$, where $|C'| = |I'| = |C|$, and let $H = G\bowtie G'$.
  We argue that ($G, k$) is a yes-instance of the \problem{split}{threshold} problem if and only if ($H, k$) is a yes-instance of the \problem{split}{interval} problem.
Since both problems are trivial yes-instances when $k \ge |C|$, we may assume henceforth $k < |C|$.  
  
  Suppose that $G - V_-$, where $|V_-| \le k$, is a threshold graph.  According to Proposition~\ref{lem:merge-threshold}, $( G - V_-) \bowtie G'$ is an interval graph.  It is the same graph as $H - V_-$.  Therefore, $V_-$ is a solution of ($H, k$).  This verifies the only if direction.

  Now suppose that $H - V_-$, where $|V_-| \le k$, is an interval graph.  Suppose for contradiction that $\underline G = G - \big( V_-\cap V(G) \big)$ is not a threshold graph.  Then $\underline G$ must contain some $P_4$; let it be $v_1 u_1 u_2 v_2$.  Since $\underline G$ is a split graph, we must have $u_1, u_2\in C$ and $v_1, v_2\in I$.  On the other hand, by the assumption $k < |C|$, neither $C'\setminus V_-$ nor $I'\setminus V_-$ can be empty.  Let $u\in C'\setminus V_-$ and $v\in I'\setminus V_-$.  By the construction, the only edges between $\{u, v\}$ and $\{v_1, u_1, u_2, v_2\}$ are $u u_1$ and $u u_2$,  but then these six vertices together induce a net in $H - V_-$, a contradiction.
\end{proof}

\begin{corollary}
  The \problem{chordal}{interval} problem is NP-complete.
\end{corollary}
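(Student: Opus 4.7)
The plan is to derive this corollary as an immediate consequence of Theorem~\ref{thm:chordal-to-interval} together with Proposition~\ref{lem:general-polynomial}(2). Since \hsc{split} is a subclass of \hsc{chordal} (every split graph is chordal, as any induced cycle would need length three in a split graph: a cycle of length at least four contains an induced $2K_2$ or uses vertices from both the clique and independent set in a way that forces a chord), the class \hsc{chordal} is a superclass of \hsc{split}.

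Concretely, I would invoke Proposition~\ref{lem:general-polynomial}(2) with ${\cal C}_1 = \text{\hsc{split}}$, ${\cal C}_2 = \text{\hsc{interval}}$, and ${\cal C} = \text{\hsc{chordal}}$. Theorem~\ref{thm:chordal-to-interval} supplies the required NP-completeness of \problem{split}{interval}, and the proposition then hands us NP-hardness of \problem{chordal}{interval}. Membership in NP is already noted at the start of Section~\ref{sec:hardness}, since \hsc{interval} can be recognized in polynomial time, so NP-completeness follows.

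There is essentially no obstacle here: the only thing to verify is the subclass relation \hsc{split} $\subset$ \hsc{chordal}, which is standard and implicit in the containment diagram of Figure~\ref{fig:overview}. The proof can therefore be a one-line citation of the preceding theorem and Proposition~\ref{lem:general-polynomial}(2).
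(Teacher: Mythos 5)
Your proposal is correct and matches the paper's (implicit) argument exactly: the corollary follows from Theorem~\ref{thm:chordal-to-interval} combined with Proposition~\ref{lem:general-polynomial}(2), using the standard containment \hsc{split} $\subset$ \hsc{chordal} and NP-membership from polynomial-time recognition of interval graphs.
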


The last result is on the deletion of any biconnected subgraph from chordal graphs.  Recall that a vertex $v$ is \emph{simplicial} in $G$ if $N[v]$ is a clique.  A graph is chordal if and only if we can make it empty by deleting simplicial vertices in the remaining graph \cite{dirac-61-chordal-graphs}.
\begin{theorem}\label{thm:biconnected-2}
  Let $F$ be a biconnected chordal graph.  If $F$ is not complete, then the \problem{chordal}{$F$-free} problem is NP-complete.  Moreover, if $F$ is a complete split graph with $|C| = 2$ and $|I| \ge 2$, then the \problem{split}{$F$-free} problem is NP-complete.
\end{theorem}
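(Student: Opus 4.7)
For the first claim the plan is to reduce from an NP-complete problem on general graphs, naturally \hsc{clique} or \hsc{independent set}, to \problem{chordal}{$F$-free}. Since $F$ is chordal it has a simplicial vertex $s$, and since $F$ is not complete it contains two non-adjacent vertices $u,v$, which by biconnectivity are joined by two internally vertex-disjoint paths. From an input graph $G$ I would build a chordal host $H$ consisting of a backbone clique tree whose maximal cliques correspond to vertices and/or edges of $G$, together with, at each clique of the backbone, a gadget copy of $F-s$ (or another carefully chosen proper subgraph of $F$) attached along an appropriate subclique. Chordality is preserved because these attachments correspond to gluing subtrees to a clique tree. The gadgets should be designed so that a copy of $F$ survives in $H - V_-$ if and only if a specific combinatorial obstruction (for example an uncovered vertex or uncovered clique) persists in the projection of $V_-$ to $V(G)$; a careful equating of deletion budgets then yields the reduction.

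For the second claim, $F$ is the complete split graph $K_2 \vee \overline{K_k}$ with $k \ge 2$. The plan is to reduce from the vertex deletion problem that takes a bipartite graph $B$ and asks for a minimum vertex set whose removal makes $B$ be $K_{2,k}$-free, which I plan to establish as NP-hard by a separate direct reduction (e.g., from \hsc{3-SAT} or a suitably restricted \hsc{vertex cover}). Given such $B$ with bipartition $(X,Y)$, construct $G$ by adding all edges inside $X$. Then $G$ is a split graph with partition $X \uplus Y$, and any induced copy of $F$ in $G$ must place its two adjacent clique vertices inside $X$ and its $k$ independent vertices inside $Y$: a $Y$-vertex cannot lie in the clique part of $F$ because the $k\ge 2$ independent neighbors required would all have to lie in $Y$ (the only nontrivial independent sets of $G$) yet also be adjacent to this $Y$-vertex, which is impossible since $Y$ is independent in $G$. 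Hence $G - V_-$ is $F$-free if and only if $B - V_-$ is $K_{2,k}$-free, and the reduction is complete.

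The main obstacle in the first part is engineering the chordal gadget attachments so that the only induced copies of $F$ in $H$ are those intended, rather than spurious copies arising between distinct gadgets; the choice of $F-s$ (removing a simplicial vertex) as the gadget is meant to leave only one controlled way for backbone vertices to complete a copy of $F$. The main obstacle in the second part is establishing the NP-hardness of the auxiliary bipartite $K_{2,k}$-free deletion problem, which presumably does not appear in this exact form in the literature and so requires a direct combinatorial reduction whose bipartite gadgets contain a copy of $K_{2,k}$ exactly at the intended configurations.
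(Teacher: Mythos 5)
Your proposal leaves genuine gaps in both halves. For the first claim you give a plan, not a proof: the backbone-plus-gadget host $H$ is never actually specified, and you yourself flag the decisive difficulty---ruling out spurious induced copies of $F$ arising between gadgets or between a gadget and the backbone---as unresolved. Nothing in the sketch puts the biconnectivity of $F$ to work, yet that is precisely the hypothesis that must carry the argument. The paper's proof reduces from vertex cover: for each edge $uv$ of the instance it attaches a private copy of $F$ through the edge $uv$ (with $u,v$ as ``attachments'') and then completes $V(G)$ into a clique. Chordality follows by simplicial elimination, and spurious copies are controlled exactly by biconnectivity: every copy of $F$ must lie inside a single block, so once one attachment of each gadget is deleted the gadget interiors fall into separate blocks, and the only remaining candidate location for a copy of $F$ is the clique on $V(G)$, which is impossible because $F$ is not complete. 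This block argument is the missing idea your sketch would still have to invent.

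For the second claim the key structural assertion is false. In the split graph $G$ obtained from $B$ by completing $X$ into a clique, an induced copy of $F=K_2\vee\overline{K_k}$ need not place all $k$ independent vertices in $Y$: since $X$ is a clique, at most one of them can lie in $X$, but one can, and that is enough to break your equivalence. Concretely, for $k=2$ (so $F$ is the diamond) take $X=\{x_1,x_2,x_3\}$, $Y=\{y\}$ and edges $x_1y,x_2y$ in $B$: the bipartite graph $B$ contains no $K_{2,2}$ whatsoever, yet $G[\{x_1,x_2,x_3,y\}]$ is a diamond, so ``$G-V_-$ is $F$-free iff $B-V_-$ is $K_{2,k}$-free'' already fails for $V_-=\emptyset$ and the optima of the two instances differ. (For $k>2$ the converse direction also breaks: a $K_{2,k}$ of $B$ whose $2$-side lies in $Y$ induces in $G$ a complete split graph with clique side $k$, not a copy of $F$.) On top of this, the NP-hardness of your auxiliary \problem{bipartite}{$K_{2,k}$-free} problem is only promised, not proved. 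The paper sidesteps all of this: it reuses the vertex-cover reduction from the first part and simply observes that when $F$ is complete split with $|C|=2$ the constructed graph is itself a split graph, so the same correctness argument gives hardness of \problem{split}{$F$-free}.
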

\begin{proof}
  We use the following reduction from the vertex cover problem.  Let $G$ be an input graph to the vertex cover problem, we conduct the following operations.
  \begin{enumerate}
  \item For each edge $uv \in E(G)$, add a distinct copy of $F$ such that each of them uses $u v$ as one of its edges.  We say that $u, v$ are the attachments for this copy of $F$.
  \item Add all possible edges among $V(G)$ to make it complete.
  \end{enumerate}
  Let $G'$ be the obtained graph.  To see that $G'$ is chordal, we give an explicit way of eliminating simplicial vertices to make $G'$ empty.  A chordal graph either is a clique or contains two nonadjacent simplicial vertices; all vertices are simplicial when it is a clique.  For each copy of $F$, we can find a simplicial vertex in $V(F)\setminus \{u, v\}$.  We keep doing this, and then only vertices in $V(G)$ remain.  They have been made a clique, and thus all of them simplicial.

  We argue that $G$ has a vertex cover of size $k$ if and only if we can delete $k$ vertices from $G'$ to make it $F$-free.  The following fact would be essential.  We consider any copy $X$ of $F$ with attachments $u$ and $v$.  If we delete $u$ or $v$, then the other becomes a cut vertex, and $X\setminus \{u, v\}$ are in different blocks from other vertices of $V(G')$.  But any other copy of $F$, if it exists, must be completely contained in a block, and thus it cannot contains any vertex in $X$.

Suppose that $V_-$ is a vertex cover of size $k$ in $G$.  We claim that $\underline G = G' - V_-$ has no copy of $F$.  For each copy of $F$ with attachments $u$ and $v$.  Therefore, a copy of $F$ in $\underline G$, if one exists, has all its vertices from $V(G)$.{  But this is not possible because $F$ is not a clique.}

Suppose  now that $V_-$ is a solution to $G'$ of size $k$.  We may assume that $V_-$ contains no new vertex: If it contains a vertex from a copy of $F$ with attachments $u$ and $v$, we can replace it by $u$.  (Note that the new set remains a solution to $G'$ because the aforementioned fact.)  Since $G' - V_-$ does not contain $F$, each copy of it has at least one of the attachments in $V_-$.  Therefore, each edge of $G$, at lest one end is in $V_-$, which means that $V_-$ is a vertex cover of $G$.
\end{proof}

\begin{figure}[h]
  \centering
  \begin{tikzpicture}[auto=left,
    every node/.style={vertex}, 
    every path/.style={thick}, scale=.6]
    
    \node["$v_1$" xshift=4mm] (v1) at (2,2) {};
    \node["$v_2$" xshift=-4mm] (v2) at (-2,2) {};
    \node["$v_3$" xshift=-4mm] (v3) at (-2,-2) {};
    \node["$v_4$" xshift=4mm] (v4) at (2,-2) {};
    \draw (v2)-- (v3)-- (v4)--(v1)-- (v2);
    \draw [gray, thin] (v2)-- (v4) (v3)-- (v1);

    \foreach \x/\y/\a in {1/2, 2/3, 3/4, 4/1} {
      \begin{scope}[rotate = 90 * \x - 90]
        \node[gray] (c\x\y) at (0, 1) {};
        \node[gray] (u2\x\y) at (0, 3) {};
        \node[gray] (u3\x\y) at (2,3.5) {};
        \node[gray] (u1\x\y) at (-2,3.5) {};
        \draw [thin, cyan] (c\x\y)--(v\x) (c\x\y)--(v\y)
        (u2\x\y)--(v\x) (u2\x\y)--(v\y)
        (u3\x\y)--(v\x) (u3\x\y)--(u2\x\y)
        (u1\x\y)--(v\y) (u1\x\y)--(u2\x\y);
      \end{scope}
    }
  \end{tikzpicture}
  \caption{Reduction for Theorem~\ref{thm:biconnected-2}, with $F$ being a tent.  The original graph $G$, drawn with blue vertices and thick edges, is a $C_4$.  The new vertices are gray and new edges thin.  The set $\{v_1, v_3\}$ is a solution to both problems.}
  \label{fig:net-free}
\end{figure}
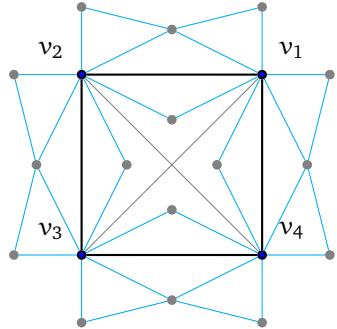

\begin{corollary}
  Problems \problem{split}{block} and \problem{chordal}{block} are NP-complete.
\end{corollary}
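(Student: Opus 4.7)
The plan is to invoke Theorem~\ref{thm:biconnected-2} with $F$ chosen to be the diamond (Figure~\ref{fig:diamond}). For this to yield the corollary, I would first recall the classical fact that a chordal graph is a block graph if and only if it is diamond-free. Consequently, on chordal (and in particular on split) input graphs, $G - V_-$ is a block graph precisely when $G - V_-$ is diamond-free, so \problem{chordal}{block} coincides with \problem{chordal}{diamond-free} as an optimization problem, and likewise \problem{split}{block} coincides with \problem{split}{diamond-free}.

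Next I would verify that the diamond satisfies both sets of hypotheses in Theorem~\ref{thm:biconnected-2}. It is chordal (its only cycles are two triangles sharing an edge), biconnected (removing any single vertex leaves a connected graph), and not complete (its two degree-two vertices are non-adjacent), so the first part of the theorem applies and gives \problem{chordal}{diamond-free} NP-complete. Moreover, the diamond is a complete split graph whose clique $C$ consists of its two degree-three vertices and whose independent set $I$ consists of its two degree-two vertices, with $|C| = |I| = 2$; thus the second part of the theorem applies and gives \problem{split}{diamond-free} NP-complete. Combining these two conclusions with the equivalence above settles both halves of the corollary.

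The only step that requires a small independent argument is the equivalence between block and diamond-free on chordal graphs. One direction is immediate: an induced diamond is biconnected and therefore lies in a single block of any host graph, but every block of a block graph is a clique while the diamond is not. For the converse, suppose a chordal graph contains a block $B$ that is not a clique; then $B$ has at least four vertices and contains non-adjacent vertices $a, c$ with a common neighbor $b$. Using 2-connectivity of $B$ one finds an $a$-to-$c$ path avoiding $b$; applying chordality to the resulting cycle through $a, b, c$ and this path yields a vertex $u$ adjacent to both $a$ and $c$, so that $\{a,b,c,u\}$ induces a diamond. I would include this sketch for completeness, but do not expect it to be the crux of the proof; the real content is already packaged in Theorem~\ref{thm:biconnected-2}.
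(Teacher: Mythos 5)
Your proposal is correct and matches the paper's (implicit) derivation: the corollary is meant to follow from Theorem~\ref{thm:biconnected-2} applied with $F$ being the diamond, which is a biconnected, non-complete chordal graph and a complete split graph with $|C|=|I|=2$, combined with the standard fact (recorded in the paper's obstruction table) that block graphs are exactly the diamond-free chordal graphs. Your extra sketch of that equivalence is harmless but not needed, since the paper takes it as known.
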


\appendix
\section*{Appendix}
\tikzstyle{vertex}  = [{fill=olive,olive,circle,draw,inner sep=1pt}]
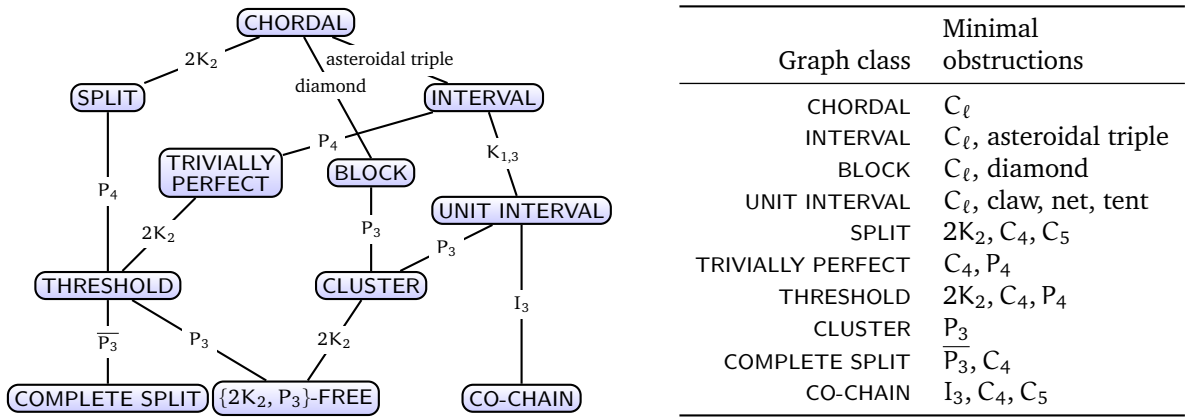
\begin{figure}[h]
  \centering
  \subfloat{\label{fig:containment}
  \begin{tikzpicture}[every path/.style={thick}, baseline=(uig.base)]
    \scriptsize
    \node[class] (chordal) at (2,5)  {\hsc{chordal}};                                 
    \node[class] (split) at (-0.5,4)             {\hsc{split}};                                   
    \node[class] (interval) at (4.5,4)           {\hsc{interval}};                                
    \node[class] (threshold) at (-0.5,1.5)       {\hsc{threshold}};                               
    \node[class] (tpg) at (1,3)                  {\hsc{trivially}\\\hsc{perfect}};                
    \node[class] (uig) at (5,2.5)                {\hsc{unit interval}};                           
    \node[class] (block) at (3,3)                {\hsc{block}};                                   
    \node[class] (cluster) at (3,1.5)            {\hsc{cluster}};                                 
    \node[class] (csplit) at (-0.5,0)            {\hsc{complete split}};                          
    \node[class] (bottom) at (2,0)               {\hsc{$\{2K_2, P_3\}$-free}};                    
    \node[class] (cochain) at (5,0)              {\hsc{co-chain}};                                
    \draw (csplit) -- (threshold) node [midway,fill=white] {\scriptsize $\overline{P_3}$};
    \draw (threshold) -- (split) node [midway,fill=white] {\scriptsize $P_4$};
    \draw (split) -- (chordal) node [midway,fill=white] {$2K_2$};
    \draw (interval) -- (chordal) node [midway,fill=white] {\scriptsize asteroidal triple};
    \draw (uig) -- (interval) node [midway,fill=white] {\scriptsize $K_{1,3}$};
    \draw (cluster) -- (uig) node [midway,fill=white] {\scriptsize $P_{3}$};
    \draw (cluster) -- (block) node [midway,fill=white] {\scriptsize $P_{3}$};
    \draw (block.north) -- (chordal) node [pos=.6,fill=white] {\scriptsize diamond};
    \draw (bottom) -- (cluster) node [midway,fill=white] {\scriptsize $2 K_{2}$};
    \draw (bottom) -- (threshold) node [midway,fill=white] {\scriptsize $P_{3}$};
    \draw (threshold) -- (tpg) node [midway,fill=white] {\scriptsize $2 K_{2}$};
    \draw (tpg) -- (interval) node [pos=.3, fill=white] {\scriptsize $P_{4}$};
    \draw (cochain) -- (uig) node [midway,fill=white] {\scriptsize ${I_3}$};
  \end{tikzpicture}
  }
  \qquad
  \subfloat{\label{fig:obstructions}
  \begin{tabular}{r l}
    \toprule
    & Minimal
    \\
    Graph class & obstructions
    \\
    \midrule
    \hsc{chordal} & $C_\ell$
    \\
    \hsc{interval} & $C_\ell$, asteroidal triple
    \\
    \hsc{block} & $C_\ell$, diamond
    \\
    \hsc{unit interval} & $C_\ell$, claw, net, tent
    \\
    \hsc{split} & $2 K_2, C_4, C_5$
    \\
    \hsc{trivially perfect} & $C_4, P_4$
    \\
    \hsc{threshold} & $2 K_2, C_4, P_4$
    \\
    \hsc{cluster} & $P_3$
    \\
    \hsc{complete split} & $\overline{P_3}, C_4$
    \\
    \hsc{co-chain} & ${I_3}, C_4, C_5$
    \\
    \bottomrule 
  \end{tabular}
}
  \caption{Forbidden induced subgraphs and containment relationships of related graph classes}
  \label{fig:classes-containment}
\end{figure}

The minimal forbidden induced subgraphs for chordal graphs are well
known.  For all the classes at lower levels, their forbidden induced
subgraphs with respect to its immediate super-classes are given on the
edges.  From them we are able to derive all the minimal forbidden
induced subgraphs for each of these classes.
For example, the characterization of unit interval graphs follows from
the characterization of interval graphs and that we can find a claw in
a chordal witness for an \emph{asteroidal triple} (i.e., three
vertices such that each pair of them is connected by a path avoiding
neighbors of the third one) that is not a net or tent.  Likewise, the
minimal forbidden induced subgraphs of trivially perfect graphs can be
derived from those of interval graphs and that all chordal witnesses
for asteroidal triples and all holes that are not $C_4$'s contain a
$P_4$.

\end{document}